\documentclass[journal,twoside,web]{ieeecolor}

\usepackage{lcsys}
\usepackage{cite}
\usepackage{amsmath,amssymb,amsfonts}
\usepackage{graphicx}
\usepackage{textcomp}

\usepackage{algorithm}
\usepackage{algpseudocode}

\usepackage{bm}

\usepackage{eucal}

\usepackage{graphicx}

\usepackage{mathtools}
\usepackage{newtxmath} 
\usepackage{multirow}

\usepackage{textcomp}
\usepackage{tikz}

\usepackage{pgfplots}
\pgfplotsset{compat=1.16}  

\usepackage[caption=false,font=footnotesize]{subfig}

\usepackage{hyperref}
\usepackage[acronym]{glossaries}
\usepackage[nameinlink]{cleveref} 

\creflabelformat{equation}{#2#1#3}
\crefname{equation}{Equation}{Equations}
\crefname{theorem}{Theorem}{Theorems}
\crefname{definition}{Definition}{Definitions}
\crefname{corollary}{Corollary}{Corollaries}
\crefname{lemma}{Lemma}{Lemmas}

\crefname{section}{Section}{Sections}

\crefname{table}{Table}{Tables}
\crefname{figure}{Figure}{Figures}
\crefname{algorithm}{Algorithm}{Algorithms}

\Crefname{table}{Table}{Tables}
\Crefname{figure}{Figure}{Figures}
\Crefname{algorithm}{Algorithm}{Algorithms}

\crefname{ineq}{Inequality}{Inequalities}
\Crefname{ineq}{Inequality}{Inequalities}
\creflabelformat{ineq}{#2{\upshape#1}#3}

\crefname{prob}{Problem}{Problems}
\Crefname{prob}{Problem}{Problems}
\creflabelformat{prob}{#2{\upshape#1}#3} 

\crefname{ass}{Assumption}{Assumptions}
\Crefname{ass}{Assumption}{Assumptions}
\creflabelformat{ass}{#2{\upshape#1}#3} 

\newtheorem{theorem}{Theorem}

\newtheorem{definition}{Definition}

\newtheorem{lemma}{Lemma}

\newtheorem{corollary}{Corollary}

\newtheorem{remark}{Remark}

\DeclarePairedDelimiter\set{\{}{\}}

\newcommand{\sthat}{\text{s.t.}}

\newcommand{\col}{\text{col}}

\newcommand{\diag}{\text{diag}}

\newcommand{\he}{\text{He}}

\newcommand{\norm}[1]{\left\| #1 \right\|}

\newcommand{\brackets}[1]{\left( #1 \right)}

\newcommand{\bracketl}[1]{\left[ #1 \right]}

\newcommand{\quadsupplyinput}[5]{\begin{bmatrix}
    #4 \\ #5
\end{bmatrix}^{\hspace{-1pt}T}\hspace{-4pt}\begin{bmatrix}
    #1 & #2 \\ * & #3
\end{bmatrix}\hspace{-3pt}\begin{bmatrix}
    #4 \\ #5
\end{bmatrix}}

\newcommand{\dbx}{\dot{\textbf{x}}}

\newcommand{\cH}{\mathcal{H}}

\newcommand{\cL}{\mathcal{L}}

\newcommand{\cT}{\mathcal{T}}

\newcommand{\cX}{\mathcal{X}}

\newcommand{\bbN}{\mathbb{N}}

\newcommand{\bbR}{\mathbb{R}}
\newcommand{\bbS}{\mathbb{S}}

\newcommand{\bzero}{\mathbf{0}}

\newcommand{\bA}{\mathbf{A}}
\newcommand{\bB}{\mathbf{B}}
\newcommand{\bC}{\mathbf{C}}
\newcommand{\bD}{\mathbf{D}}

\newcommand{\bF}{\mathbf{F}}
\newcommand{\bG}{\mathbf{G}}
\newcommand{\bH}{\mathbf{H}}
\newcommand{\bI}{\mathbf{I}}

\newcommand{\bK}{\mathbf{K}}
\newcommand{\bL}{\mathbf{L}}

\newcommand{\bN}{\mathbf{N}}

\newcommand{\bP}{\mathbf{P}}
\newcommand{\bQ}{\mathbf{Q}}
\newcommand{\bR}{\mathbf{R}}
\newcommand{\bS}{\mathbf{S}}

\newcommand{\bV}{\mathbf{V}}

\newcommand{\bX}{\mathbf{X}}
\newcommand{\bY}{\mathbf{Y}}
\newcommand{\bZ}{\mathbf{Z}}

\newcommand{\be}{\mathbf{e}}

\newcommand{\bn}{\mathbf{n}}

\newcommand{\bu}{\mathbf{u}}

\newcommand{\bx}{\mathbf{x}}
\newcommand{\by}{\mathbf{y}}
\newcommand{\bz}{\mathbf{z}}

\newcommand{\scrC}{\mathscr{C}}

\newcommand{\scrG}{\mathscr{G}}

\newcommand{\barbA}{\overline{\bA}}
\newcommand{\barbB}{\overline{\bB}}
\newcommand{\barbC}{\overline{\bC}}

\newcommand{\barbH}{\overline{\bH}}

\newcommand{\hatbe}{\widehat{\be}}

\newcommand{\hatbn}{\widehat{\bn}}

\newcommand{\hatbu}{\widehat{\bu}}

\newcommand{\hatbx}{\widehat{\bx}}
\newcommand{\hatby}{\widehat{\by}}
\newcommand{\hatbz}{\widehat{\bz}}

\newcommand{\hatbA}{\widehat{\bA}}
\newcommand{\hatbB}{\widehat{\bB}}
\newcommand{\hatbC}{\widehat{\bC}}
\newcommand{\hatbD}{\widehat{\bD}}

\newcommand{\hatbF}{\widehat{\bF}}

\newcommand{\hatbH}{\widehat{\bH}}

\newcommand{\hatbK}{\widehat{\bK}}

\newcommand{\hatbP}{\widehat{\bP}}
\newcommand{\hatbQ}{\widehat{\bQ}}
\newcommand{\hatbR}{\widehat{\bR}}
\newcommand{\hatbS}{\widehat{\bS}}

\newcommand{\tilbB}{\widetilde{\bB}}
\newcommand{\tilbC}{\widetilde{\bC}}

\newcommand{\tilbH}{\widetilde{\bH}}

\newcommand{\tilbK}{\widetilde{\bK}}

\newcommand{\thh}{\mathrm{th}}

\newacronym{ndt}{NDT}{Network Dissipativity Theorem}
\newacronym{io}{IO}{input-output}

\newacronym{lti}{LTI}{linear time-invariant}
\newacronym[plural=LMIs,longplural=linear matrix inequalities]{lmi}{LMI}{linear matrix inequality}
\newacronym{lqr}{LQR}{linear quadratic regulator}
\newacronym{lqg}{LQG}{Linear Quadratic Gaussian}
\newacronym{lq}{LQ}{linear quadratic}

\newacronym{admm}{ADMM}{alternating direction methods of multipliers}
\newacronym{sdp}{SDP}{semidefinite program}
\newacronym[plural=BMIs,longplural=bilinear matrix inequalities]{bmi}{BMI}{bilinear matrix inequality}
\newacronym{ico}{ICO}{iterative convex overbounding}

\newacronym{uav}{UAV}{unmanned aerial vehicles}
\newacronym{cps}{CPS}{cyber-physical system}
\usepackage{arydshln}
\crefformat{equation}{(#2#1#3)}

\newcommand{\haty}{\widehat{y}}

\newcommand{\dbX}{\delta\hspace{-.5pt}{\bX}}
\newcommand{\dbY}{\delta\hspace{-.5pt}{\bY}}

\newcommand{\dbF}{\delta\hspace{-.5pt}{\bF}}
\newcommand{\dbH}{\delta\hspace{-.5pt}{\bH}}
\newcommand{\dbQ}{\delta\hspace{-.5pt}{\bQ}}
\newcommand{\dbS}{\delta\hspace{-.5pt}{\bS}}
\newcommand{\dbR}{\delta\hspace{-.5pt}{\bR}}

\newcommand{\dPi}{\delta\hspace{-.5pt}{\Pi}}
\newcommand{\dtilbB}{\delta\hspace{-.5pt}{\tilbB}}
\newcommand{\dtilbC}{\delta\hspace{-.5pt}{\tilbC}}
\newcommand{\dhatbK}{\delta\hspace{-.5pt}{\hatbK}}
\newcommand{\dhatbF}{\delta\hspace{-.5pt}{\hatbF}}
\newcommand{\dhatbP}{\delta\hspace{-.5pt}{\hatbP}}
\newcommand{\dhatbQ}{\delta\hspace{-.5pt}{\hatbQ}}
\newcommand{\dhatbS}{\delta\hspace{-.5pt}{\hatbS}}
\newcommand{\dhatbR}{\delta\hspace{-.5pt}{\hatbR}}
\newcommand{\dhatbA}{\delta\hspace{-.5pt}{\hatbA}}
\newcommand{\dhatbB}{\delta\hspace{-.5pt}{\hatbB}}
\newcommand{\dhatbC}{\delta\hspace{-.5pt}{\hatbC}}
\newcommand{\dhatbD}{\delta\hspace{-.5pt}{\hatbD}}
\newcommand{\dtilbH}{\delta\hspace{-.5pt}{\tilbH}}
\newcommand{\dhatbH}{\delta\hspace{-.5pt}{\hatbH}}
\newcommand{\dbarbH}{\delta\hspace{-.5pt}{\barbH}}
\newcommand{\card}{\text{card}}

\def\BibTeX{{\rm B\kern-.05em{\sc i\kern-.025em b}\kern-.08em
    T\kern-.1667em\lower.7ex\hbox{E}\kern-.125emX}}
\begin{document}
\title{Communication-Aware Dissipative Output Feedback Control}
\author{Ingyu Jang, \IEEEmembership{Graduate Student Member, IEEE}, and Leila J. Bridgeman, \IEEEmembership{Member, IEEE}
\thanks{*This work is supported by ONR Grant No. N00014-23-1-2043.}
\thanks{Ingyu Jang (PhD Student), and Leila Bridgeman (Assistant Professor) are with the Department of Mechanical Engineering and Material Science, Duke University, Durham, NC 27708 USA
        (email: {\tt\small ij40@duke.edu; ljb48@duke.edu}, phone: 919-225-4215). }}

\maketitle

\begin{abstract}
Communication-aware control is essential to reduce costs and complexity in large-scale networks.  
This work proposes a method to design dissipativity-augmented output feedback controllers with reduced online communication.
The contributions of this work are three fold: a generalized well-posedness condition for the controller network, a convex relaxation for the constraints that infer stability of a network from dissapativity of its agents, and a synthesis algorithm integrating the Network Dissipativity Theorm, alternating direction method of multipliers, and iterative convex overbounding.
The proposed approach yields a sparsely interconnected controller that is both robust and applicable to networks with heterogeneous nonlinear agents.
The efficiency of these methods is demonstrated on heterogeneous networks with uncertain and unstable agents, and is compared to standard $\cH_\infty$ control.
\end{abstract}

\begin{IEEEkeywords}
Communication-Aware Control, Robust Control, Networked System Control, Heterogeneous Network, Nonlinear System Control
\end{IEEEkeywords}

\section{Introduction} \label{sec:introduction}
\IEEEPARstart{N}{etworked} control systems are increasingly prevalent in modern infrastructure, including smart grids, power plant networks, and swarm robotics. 
While centralized control schemes often suffer from scalability issues, such as prohibitive communication overhead, fully decentralized approaches can degrade closed-loop performance.
As networked systems grow in complexity, controller architectures must optimize the trade-off between performance and communication efficiency, specifically by promoting sparsity in agent interconnections \cite{jovanovic2016controller}. 
Many methods balance controller architecture against performance \cite{fardad2011sparsity,sadabadi2015fixed,babazadeh2016sparsity,lian2017game,lian2018sparsity,arastoo2016closed,locicero2020sparsity,lin2013design,eilbrecht2017distributed,negi2020sparsity,jang2025communication}, but most frameworks are restricted to \gls{lti} agents or employ full-state feedback, which is unrealistic in real physical situations.
This paper addresses the synthesis of robust, communication-aware controllers for networks with heterogeneous nonlinear agents under partial state feedback.

Imposing an $\ell_0$ norm (cardinality) constraint can directly yield the optimal sparse controller, but this results in NP-hard problems \cite{natarajan1995sparse}.
To circumvent this computational challenge, various sparsity-promoting methods have been developed, including $\ell_1$-norm relaxations \cite{fardad2011sparsity,sadabadi2015fixed,babazadeh2016sparsity}, gradient-based algorithms \cite{lian2017game,lian2018sparsity}, methods minimizing perturbations from a well-performing reference controller \cite{arastoo2016closed,locicero2020sparsity}, and \gls{admm} with sparsity penalties \cite{lin2013design,eilbrecht2017distributed,negi2020sparsity}.
However, the majority of existing works are restricted to \gls{lti} plants or assume full-state feedback frameworks.
Although some works \cite{eilbrecht2017distributed,locicero2020sparsity,sadabadi2015fixed} have extended these methods to observer-based control with sparse matrix parameters, the resulting architectures still exchange full observer states between agents. 
The more realistic setting is one where controller communication is restricted to local output information, similar to the plant interconnections, rather than full internal states.

Dissipativity \cite{willems1972dissipative,hill2003stability} offers a versatile framework for analyzing nonlinear dynamics by modeling systems as input–output operators rather than relying on internal state descriptions.
This perspective is especially powerful due to its compositional properties \cite{zakeri2022passivity}. 
The \gls{ndt} \cite{moylan1978stability,vidyasagar1981input} leverages this modularity to certify robust network stability using only open-loop dissipativity characteristics of individual agents.
By decoupling agent-level dynamics from the network topology, \gls{ndt} is uniquely suited for networked control with heterogeneous, nonlinear agents, as seen in  \cite{arcak2016networks,locicero2025dissipativity}, which applied \gls{ndt} to the design of centralized or decentralized controllers for large-scale networks.
It also offers a natural extension to distributed optimization paradigms \cite{boyd2011distributed}, which was leveraged in \cite{meissen2015compositional}.

Building upon \cite{jang2025communication}, which addressed dissipativity-based communication-aware control under full-state feedback, this paper considers the more realistic scenario of networked dynamics output-feedback control, in which controllers communicate only their filtered output information with others. 
Our objective is to identify the optimal \gls{io} communication links between local controllers while minimizing a global $\cH_\infty$-norm performance objective, by integrating \gls{ndt}, \gls{admm} \cite{boyd2011distributed}, and \gls{ico} \cite{warner2017iterative}.

The primary contributions of this work are threefold. 
First, we establish a generalized well-posedness condition for the controller network, ensuring a reasonable global control law as an extension of classical feedback well-posedness. 
Second, we derive a convex relaxation for the global $\cH_\infty$-norm constraint within a networked dynamics output-feedback framework, enabling efficient computation.
Third, we propose a computationally tractable synthesis algorithm that combines \gls{ndt}, \gls{admm}, and \gls{ico} to solve the sparse controller design problem.
This approach yields a sparsely interconnected controller that is robust and applicable to networks of nonlinear and heterogeneous agents.
Owing to the modularity of the \gls{ndt} framework, the proposed approach yields a sparsely interconnected controller that is both robust and applicable to networks with heterogeneous nonlinear agents. Furthermore, this modular structure ensures that the synthesis problem is readily extendable to distributed optimization paradigms.

\section{Preliminaries}
\subsection{Notation}
The sets of real and natural numbers up to $n$ are denoted by $\bbR$ and $\bbN_n$, respectively. 
The set of real $n{\times}m$ matrices is $\bbR^{n{\times} m}$. 
If $(\bA)_{ij}{\in}\bbR^{n_i{\times}m_j}$ and $\bA{\in}\bbR^{\sum_{i=1}^Nn_i{\times}\sum_{j=1}^Mm_j}$, then $(\bA)_{ij}$ is said to be a ``block'' of $\bA$, and $\bA$ is said to be in $\bbR^{N\times M}$ block-wise.
The set of $n{\times} n$ symmetric matrices is $\bbS^n$. 
The notation $\bA{\prec}0$ and $\bA{\preceq}0$ indicates that $\bA$ is negative definite and negative semi-definite, respectively. 
For brevity, $\he(\bA){=}\bA{+}\bA^T$ and asterisks, $*$, denote duplicate blocks in symmetric matrices.
$\cT_0^1(\bA)$ is the 1st order Taylor expansion of the matrix variable $\bA$ from its initial point $\bA^0$, meaning $\cT_0^1(\bA){=}\bA^0{+}\delta\bA$.
The set of square integrable functions is $\cL_{2}$. The Frobenius norm and $\cL_2$ norm are denoted by $\norm{\cdot}_F$ and $\norm{\cdot}_2$, respectively.  The truncation of a function $\textbf{y}(t)$ at $T$ is denoted by $\by_T(t)$, where $\by_T(t){=}\by(t)$ if $t{\leq}T$, and $\by_T(t){=}0$ otherwise. If $\|\by_T\|_2^2{=}{\langle}\by_T{,}\by_T{\rangle}{=}\int_0^{\infty}\by_T^T(t)\by_T(t)dt{<}\infty$ for all $T{\geq}0$, then $\by{\in}\cL_{2e}$, where $\cL_{2e}$ is the extended $\cL_2$ space.

\subsection{\texorpdfstring{$\bQ\bS\bR$}{QSR}-Dissipativity of Large-Scale Systems}
In this paper, controllers are synthesized based on the $\bQ\bS\bR$-dissipativity of each agent, defined as follows.
\begin{definition} [$QSR$-Dissipativity, \cite{vidyasagar1981input}] \label{def:dissipativity}
    Let $\bQ{\in}\bbS^l$, $\bR{\in}\bbS^m$, and $\bS{\in}\bbR^{l\times m}$. The operator $\scrG{:}\cL_{2e}^m{\mapsto}\cL_{2e}^l$ is $\bQ\bS\bR$-dissipative if there exists $\beta{\in}\bbR$ such that for all $\bu{\in}\cL_{2e}^m$ and $T{\geq}0$
    \begin{align} \label{eq:dissipativity}
        \int_0^T\quadsupplyinput{\bQ}{\bS}{\bR}{\scrG(\bu(t))}{\bu(t)}dt\geq\beta.
    \end{align}
\end{definition}

\cref{thm:stability_thm} relates dissipativity to \gls{io} stability, defined next.
\begin{definition} [\gls{io} or $\cL_2$-stability, \cite{lozano2013dissipative}] \label{def:io stability}
    An operator $\scrG:\cX_{e}^m{\mapsto}\cX_{e}^l$ is \gls{io}-stable, if for any $\bu{\in}\cX^m$ and all $\bx_0$ where $\cX$ is any semi-inner product space and $\cX_e$ is its extension, there exists a constant $\kappa{>}0$ and a function $\beta(\bx_0)$ such that
    \begin{align}
        \norm{(\scrG(\bu))_T}_\cX\leq\kappa\norm{\bu_T}_\cX+\beta(\bx_0)
    \end{align}
    where $\norm{\cdot}_\cX$ is the induced norm of the innerproduct space. If the space $\cX$ is $\cL_2$, then \gls{io} stability is called $\cL_2$ stability.
\end{definition}
\begin{theorem} \label{thm:stability_thm}
    The operator is $\cL_2$ stable if and only if it is $QSR$-dissipative with $\bQ\prec0$.
\end{theorem}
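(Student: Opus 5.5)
We prove the two directions separately, working directly from \cref{def:dissipativity} and \cref{def:io stability} with $\cX=\cL_2$. Throughout, write $\by=\scrG(\bu)$ and use truncations $\by_T,\bu_T$. Since integrals over $[0,T]$ equal full-line integrals of the truncated signals, \eqref{eq:dissipativity} reads
\begin{align*}
\langle \by_T,\bQ\by_T\rangle+2\langle \by_T,\bS\bu_T\rangle+\langle \bu_T,\bR\bu_T\rangle\ge\beta .
\end{align*}

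\emph{Sufficiency} ($\bQ\prec0\Rightarrow\cL_2$ stable). Let $q>0$ be such that $-\bQ\succeq q\bI$, let $s=\norm{\bS}$, and let $r=\max(\norm{\bR},0)$. The inequality above gives
\begin{align*}
q\norm{\by_T}_2^2\le 2s\norm{\by_T}_2\norm{\bu_T}_2+r\norm{\bu_T}_2^2-\beta .
\end{align*}
Apply Young's inequality $2s ab\le \tfrac{q}{2}a^2+\tfrac{2s^2}{q}b^2$ and absorb the $\tfrac q2\norm{\by_T}_2^2$ term into the left-hand side. This yields $\norm{\by_T}_2^2\le c_1\norm{\bu_T}_2^2+c_2$ with $c_1=\tfrac{4s^2}{q^2}+\tfrac{2r}{q}$ and $c_2=\max(-2\beta/q,0)$. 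Taking square roots and using $\sqrt{a+b}\le\sqrt a+\sqrt b$ gives $\norm{\by_T}_2\le\kappa\norm{\bu_T}_2+\beta'$, where $\kappa=\sqrt{c_1}$ (increase it slightly if $c_1=0$) and $\beta'=\sqrt{c_2}$. The constant $\beta'$ depends only on $\beta$, which plays the role of the initial-condition term $\beta(\bx_0)$.

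\emph{Necessity} ($\cL_2$ stable $\Rightarrow$ dissipative with some $\bQ\prec0$). The natural choice is the small-gain supply $\bQ=-\bI$, $\bS=0$, $\bR=\gamma^2\bI$. With this choice, dissipativity amounts to $\norm{\by_T}_2^2\le\gamma^2\norm{\bu_T}_2^2-\beta$. From \cref{def:io stability} we have $\norm{\by_T}_2\le\kappa\norm{\bu_T}_2+\beta_0$. Squaring and applying Young's inequality gives $\norm{\by_T}_2^2\le 2\kappa^2\norm{\bu_T}_2^2+2\beta_0^2$, so $\gamma^2=2\kappa^2$ and $\beta=-2\beta_0^2$ work.

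\emph{Main obstacle.} The obstacle is in this direction. \cref{def:io stability} quantifies over $\bu\in\cL_2$, whereas \cref{def:dissipativity} requires the inequality for every $\bu\in\cL_{2e}$ and every $T$. I would bridge this with causality of $\scrG$, which is implicit in the operator setting. For $\bu\in\cL_{2e}$ we have $\bu_T\in\cL_2$, and causality gives $(\scrG(\bu))_T=(\scrG(\bu_T))_T$, so the bound transfers. A second subtlety is that the bias must be uniform in the input: $\beta_0$ may depend on $\bx_0$ but not on $\bu$, which is consistent with the constant $\beta$ in \eqref{eq:dissipativity}. I would state explicitly that causality is assumed and that both constants are allowed to depend on the initial condition.
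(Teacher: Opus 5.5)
Your proof is correct. The paper gives no proof of \cref{thm:stability_thm}; it states it as a background fact next to definitions taken from \cite{vidyasagar1981input,lozano2013dissipative}. What you wrote is the standard argument from that literature:
\begin{itemize}
\item bound $\langle\by_T,\bQ\by_T\rangle\le -q\norm{\by_T}_2^2$, then absorb the cross term with Young's inequality to get a finite gain with bias;
\item for the converse, use the supply rate $\gamma^2\norm{\bu}^2-\norm{\by}^2$ with $\gamma^2=2\kappa^2$ and $\beta=-2\beta_0^2$.
\end{itemize}

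Your causality remark is a genuine hypothesis, not pedantry. The definition of $\cL_2$ stability only tests inputs in $\cL_2$, while dissipativity quantifies over all of $\cL_{2e}$. For a noncausal operator the ``only if'' direction actually fails. Take an operator that returns $0$ on $\cL_2$ and a fixed $\mathbf{w}\in\cL_{2e}\setminus\cL_2$ otherwise. Inputs that vanish on $[0,T]$ but not afterwards force $\int_0^T\mathbf{w}^T\bQ\mathbf{w}\,dt\ge\beta$ for every $T$, which is impossible when $\bQ\prec0$.

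Similarly, you are right to read $\kappa$ in that definition as independent of $\bu$, although the quantifier order as written does not say so.

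Two cosmetic fixes:
\begin{itemize}
\item $\max(\norm{\bR},0)$ is redundant; use $\max(\lambda_{\max}(\bR),0)$ or simply $\norm{\bR}$.
\item Replace $\beta_0$ by $|\beta_0|$ before squaring, so both sides are nonnegative.
\end{itemize}
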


\gls{ndt}, stated next, shows how agent-level dissipativity extends to the network level, thereby guaranteeing the $\cL_2$-stability of the networked system.
\begin{theorem} [\gls{ndt}, \cite{moylan2003stability}] \label{thm:ndt}
    Consider $N$ $\bQ_i\bS_i\bR_i$ dissipative operators, $\scrG_i{:}\cL_{2e}^{m_i}{\mapsto}\cL_{2e}^{l_i}$, interconnected by matrices, $(\bH)_{ij}{:}\cL_{2e}^{l_j}{\mapsto}\cL_{2e}^{m_i}$ as
    \begin{align} \label{eq:interconnected_systems}
        \by_i{=}\scrG_i\bu_i,\quad\bu_i{=}\be_i{+}\hspace{-5pt}\sum_{j\in\bbN_N}\hspace{-3pt}(\bH)_{ij}\by_j,\quad\by{=}\scrG\be,\quad\bu{=}\be{+}\bH\by,
    \end{align}
    where $\bu{=}\col(\bu_i)_{i\in\bbN_N}$, $\by{=}\col(\by_i)_{i\in\bbN_N}$, $\be{=}\col(\be_i)_{i\in\bbN_N}$, and $\scrG{=}\diag(\scrG_i)_{i\in\bbN_N}$. Then, $\scrG{:}\cL_{2e}^m{\mapsto}\cL_{2e}^l$ is $\cL_2$ stable if 
    \begin{align} \label{eq:network_dissipativity_thm}
        \bQ+\bS\bH+\bH^T\bS^T+\bH^T\bR\bH\prec0
    \end{align}
    with $\bQ{=}\diag(\bQ_i)_{i\in\bbN_N}$, and $\bS$ and $\bR$ defined analogously.
\end{theorem}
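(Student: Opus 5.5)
We would prove \cref{thm:ndt} by exhibiting the closed-loop map $\be\mapsto\by$ as a $\bQ\bS\bR$-dissipative operator whose effective $\bQ$ is negative definite, and then appealing to \cref{thm:stability_thm}. We read the interconnected system \cref{eq:interconnected_systems} as the map from the external input $\be$ to the output $\by$, with $\by=\scrG\bu$ and $\bu=\be+\bH\by$, and we assume the loop is well posed, so that for every $\be\in\cL_{2e}^m$ there are $\bu,\by\in\cL_{2e}$ solving it.

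\textbf{Step 1: stack the agent inequalities.} Add the dissipation inequalities \cref{eq:dissipativity} of the $N$ agents. Because $\bQ,\bS,\bR$ are block diagonal, this gives, for every $T\ge0$,
\begin{align*}
\int_0^T \quadsupplyinput{\bQ}{\bS}{\bR}{\by}{\bu}\,dt\geq\beta,
\end{align*}
where $\beta=\sum_i\beta_i$.

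\textbf{Step 2: change variables.} Substitute $\bu=\be+\bH\by$. The integrand becomes
\begin{align*}
\by^T(\bQ+\bS\bH+\bH^T\bS^T+\bH^T\bR\bH)\by+2\by^T(\bS+\bH^T\bR)\be+\be^T\bR\be .
\end{align*}
This shows that the closed-loop operator $\be\mapsto\by$ is $\hat\bQ\hat\bS\hat\bR$-dissipative with
\begin{align*}
\hat\bQ=\bQ+\bS\bH+\bH^T\bS^T+\bH^T\bR\bH,\quad \hat\bS=\bS+\bH^T\bR,\quad \hat\bR=\bR .
\end{align*}

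\textbf{Step 3: conclude.} By \cref{eq:network_dissipativity_thm}, $\hat\bQ\prec0$, so \cref{thm:stability_thm} yields $\cL_2$ stability. If we prefer a direct argument, let $\epsilon>0$ satisfy $\hat\bQ\preceq-\epsilon\bI$. Bounding the cross term with Young's inequality, for any $\eta>0$,
\begin{align*}
\epsilon\norm{\by_T}_2^2\le 2\norm{\hat\bS}\norm{\by_T}_2\norm{\be_T}_2+\norm{\bR}\norm{\be_T}_2^2-\beta .
\end{align*}
Choosing $\eta=\epsilon/2$ absorbs the $\norm{\by_T}_2^2$ part of the cross term into the left-hand side. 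Taking square roots then gives
\begin{align*}
\norm{\by_T}_2\le\kappa\norm{\be_T}_2+\beta',
\end{align*}
which is exactly \cref{def:io stability}.

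\textbf{Main obstacle.} The main difficulty is well-posedness and causality of the feedback loop. The agent dissipation inequalities only hold for signals that actually exist in $\cL_{2e}$, and truncating at $T$ must commute with the static gain $\bH$. For a constant matrix $\bH$ this commutation is immediate. We expect to state existence and uniqueness of the loop solutions as a standing assumption, as is standard in \cite{moylan1978stability,vidyasagar1981input}, rather than prove it.
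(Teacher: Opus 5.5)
Your proposal is correct and is the classical argument (cf.\ \cite{moylan1978stability,vidyasagar1981input}), which the paper does not reproduce but cites from \cite{moylan2003stability}: sum the agent supply rates, substitute $\bu=\be+\bH\by$ to get a closed-loop supply rate whose $\by\by$ block is $\bQ+\bS\bH+\bH^T\bS^T+\bH^T\bR\bH$, and invoke \cref{thm:stability_thm}, under the same standing well-posedness assumption you make explicit. In your direct bound, two small points should be stated: $\beta\le0$ (set $T=0$ in the dissipation inequality), which makes $\beta'=\sqrt{-2\beta/\epsilon}$ well defined, and finiteness of $\norm{\by_T}_2$, which follows from well-posedness and is what justifies absorbing the $\eta\norm{\by_T}_2^2$ term with $\eta=\epsilon/2$.
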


\subsection{\texorpdfstring{\gls{ico}}{ico}} \label{subChap:convex_overbounding}
Optimal control synthesis problems frequently involve nonconvex \glspl{bmi} of the form
\begin{align} \label{eq:bmi}
    \bQ+\he(\bX\bN\bY)\prec0,
\end{align}
where $\bN{\in}\bbR^{p{\times}q}$ is fixed, and $\bQ{\in}\bbS^n$, $\bX{\in}\bbR^{n{\times}p}$. and $\bY{\in}\bbR^{q{\times}n}$ are design variables.
To handle the general NP-hardness of \cref{eq:bmi}, convex conservatishm can be introduced via \cref{thm:overbounding}.
\begin{theorem} [\cite{sebe2018sequential}] \label{thm:overbounding}
    Consider the matrices $\bQ{\in}\bbS^n$, $\bN{\in}\bbR^{p\times q}$, $\bX{\in}\bbR^{n\times p}$. and $\bY{\in}\bbR^{q\times n}$, where $\bQ$, $\bX$, and $\bY$ are design variables.
    The \gls{bmi} condition $\bQ{+}\he(\bX\bN\bY){\prec}0$ is implied by
    \begin{align} \label{eq:overbounding_sebe}
        \begin{bmatrix}
            \bQ & \bX\bN{+}\bY^T\bG^T \\
            \bN^T\bX^T{+}\bG\bY & -\he(\bG)
        \end{bmatrix}{\prec}0
    \end{align}
    for any $\bG{\in}\bbR^{q{\times}q}$ satisfying $\he(\bG){\succ}0$.
\end{theorem}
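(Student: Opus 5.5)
The plan is to use a Schur complement on the $(2,2)$ block of \cref{eq:overbounding_sebe} and then a completion-of-squares bound to remove $\bG$. First, I will note that \cref{eq:overbounding_sebe} forces $-\he(\bG){\prec}0$, which is consistent with the hypothesis $\he(\bG){\succ}0$. It also makes $\bG$ nonsingular: if $\bG v{=}0$, then $v^T\he(\bG)v{=}0$, so $v{=}0$. By the Schur complement, \cref{eq:overbounding_sebe} is then equivalent to
\begin{align*}
\bQ+(\bX\bN+\bY^T\bG^T)\,\he(\bG)^{-1}(\bN^T\bX^T+\bG\bY)\prec0 .
\end{align*}
It therefore suffices to show that the left-hand side of \cref{eq:bmi}, $\bQ{+}\he(\bX\bN\bY)$, is bounded above in the Loewner order by the left-hand side of this inequality.

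For the key inequality, I set $\bM{=}\bX\bN$ and $\bZ{=}\bG\bY$, so that the Schur term becomes $(\bM{+}\bY^T\bG^T)\he(\bG)^{-1}(\bM^T{+}\bZ)$. Expanding, the cross terms are $\bM\he(\bG)^{-1}\bZ$ plus its transpose. These do not obviously reduce to $\he(\bM\bY)$, so rather than match terms directly I will work with the quadratic form in $z{\in}\bbR^n$ and minimize over an auxiliary vector $w$. Define
\begin{align*}
\phi(w)=z^T\bQ z+2z^T\bM w+2w^T\bG\bY z-w^T\he(\bG)w
\end{align*}
for $w{\in}\bbR^q$. This is exactly the quadratic form of the block matrix in \cref{eq:overbounding_sebe} evaluated at $\col(z,w)$. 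Note that $2w^T\bG\bY z$ is the form of $\he$ applied to the off-diagonal term, and that $w^T\he(\bG)w{=}2w^T\bG w$.

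Next, I take $w{=}\bY z$. Then $w^T\bG\bY z{=}w^T\bG w$, so the $\bG$ terms cancel exactly:
\begin{align*}
2w^T\bG\bY z - w^T\he(\bG)w = 2w^T\bG w - 2w^T\bG w = 0 .
\end{align*}
What remains is $\phi(\bY z)=z^T\bQ z+2z^T\bX\bN\bY z=z^T(\bQ+\he(\bX\bN\bY))z$.

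Finally, I conclude. The block matrix in \cref{eq:overbounding_sebe} is negative definite, so for any $z{\neq}0$ the vector $\col(z,\bY z)$ is nonzero. Hence $\phi(\bY z){<}0$, which is exactly $\bQ{+}\he(\bX\bN\bY){\prec}0$. This test-vector substitution actually bypasses the Schur step entirely, so the proof reduces to a congruence with $\begin{bmatrix}\bI & \bY^T\end{bmatrix}$: pre- and post-multiplying \cref{eq:overbounding_sebe} by this full-row-rank matrix and its transpose preserves strict negativity. The main obstacle is choosing the right congruence. The naive Schur expansion leaves $\bG$-dependent cross terms that are hard to bound, whereas the substitution $w{=}\bY z$ makes them cancel. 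I will also verify that $\he(\bG){\succ}0$ plays no role in the implication beyond guaranteeing that feasible solutions exist, for instance with $\bG$ scaled large.
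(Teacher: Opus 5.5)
Your proof is correct and complete. Congruence with $\begin{bmatrix}\bI & \bY^T\end{bmatrix}$ gives $\bQ{+}\he(\bX\bN\bY){+}\bY^T(\bG^T{+}\bG)\bY{-}\bY^T\he(\bG)\bY{=}\bQ{+}\he(\bX\bN\bY)$. Strictness survives because $\col(z,\bY z){\neq}0$ for $z{\neq}0$.

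The paper gives no proof of this theorem; it imports it from \cite{sebe2018sequential}. So the relevant comparison is with the Schur-complement route you began and then dropped, which is in fact not obstructed. With $\bB{=}\bX\bN{+}\bY^T\bG^T$, completing the square yields the identity
\[
\bB\,\he(\bG)^{-1}\bB^T-\he(\bX\bN\bY)=(\bX\bN-\bY^T\bG)\,\he(\bG)^{-1}(\bX\bN-\bY^T\bG)^T\succeq 0.
\]
This is exactly the Loewner bound you wanted, and the Schur complement then finishes.

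That route must use $\he(\bG){\succ}0$ to invert. In exchange, it exposes the exact conservatism gap, which is what the iterative scheme exploits: with $\bX,\bY$ replaced by $\dbX,\dbY$, the gap is second order and vanishes at zero perturbation. Your congruence is shorter and needs no inverse. It also shows that $\he(\bG){\succ}0$ follows from the LMI rather than being needed for the implication.

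One correction to your closing aside: scaling $\bG$ up does not produce feasibility in general. For $\bG{=}g\bI$ the gap is $\frac{1}{2g}\bX\bN\bN^T\bX^T{-}\frac{1}{2}\he(\bX\bN\bY){+}\frac{g}{2}\bY^T\bY$, which grows without bound as $g{\to}\infty$ unless $\bY{=}\bzero$. In the paper's scheme, feasibility comes from the zero-perturbation point, not from the choice of $\bG$. This does not affect the implication you were asked to prove.
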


The conservative effect of \cref{eq:overbounding_sebe} can be mitigated by iteratively updating a feasible point, $(\bX^i,\bY^i)$, satisfying \cref{eq:bmi}.
With the update, $(\bX^{i+1},\bY^{i+1}){=}(\bX^i{+}\dbX,\bY^i{+}\dbY)$, $\dbX$ and $\dbY$ serve as the decision variables. The tightening of \cref{eq:overbounding_sebe} relative to \cref{eq:bmi} then lies in proportion to these perturbations. 
As detailed in \cite{warner2017iterative}, this iterative scheme reduces the conservatism inherent in \cref{thm:overbounding}.
Each optimization problem remains feasible since $\dbX{=}\bzero$ and $\dbY{=}\bzero$ yield the initial feasible point.

\begin{remark}
    In this paper, $\bI$ is used as $\bG$, but any $\bG$ satisfying $\he(\bG){\succ}0$ can be used for \cref{eq:overbounding_sebe}. 
\end{remark}


\section{Sparsity-Promoting Dissipativity-Augmented Control}
Consider a multi-agent networked system $\scrG$ consisting of $N$ heterogeneous agents $\scrG_i$ interconnected through $\bH$. The overall network dynamics are described by
\begin{align} \label{eq:plant}
    {\setlength{\arraycolsep}{4pt}\begin{array}{rlll}
        \scrG_i: & \dbx_i=f_i(\bx_i,\bu_i), & \by_i=h_i(\bx_i) \\
        \scrG: & \dbx=f(\bx,\be),&\bz=h(\bx), & \bu=\be+\bH\by,\;\;\bz=\by
    \end{array}}
\end{align}
where $\bx_i{\in}\cL_{2e}^{n_i}$, $\bu_i{\in}\cL_{2e}^{m_i}$, and $\by_i{\in}\cL_{2e}^{l_i}$ are the states, inputs, and outputs of the $i^\thh$ agent, respectively. $\bu$, $\be$, and $\by$ are stacked vectors defined in \cref{thm:ndt}, where $\be$ represents the exogenous input to the global network. The global network output is denoted by $\bz$, which is equivalent to $\by$.

\subsection{Well-Posedness of Controller Interconnection}
Consider a network of agents and local dynamic output feedback controllers, $\scrC_i$, that communicate through their output measurements. 
This paper aims to jointly design local controllers for each agent and a sparse network topology, so that the resulting global controller $\scrC$ stabilizes and regulates the network $\scrG$. 
\cref{lem:well_posed} provides conditions for well-posedness, which ensures the existence and uniqueness of the closed-loop solution. 

\begin{definition} [Chapter 5.2 \cite{khalil1996robust}]
    An interconnected system is said to be well-posed if all interconnected transfer matrices are well-defined and proper.
\end{definition}

\begin{lemma} \label{lem:well_posed}
    Consider $N$ \gls{lti} systems with minimal state-space realizations $\scrC_i{:}\dot\hatbx_i{=}\hatbA_i\hatbx_i{+}\hatbB_i\hatbu_i$, $\hatby_i{=}\hatbC_i\hatbx_i{+}\hatbD_i\hatbu_i$. 
    Construct the global \gls{lti} system $\scrC$ through the interconnections $\hatbu{=}\tilbH_y\hatbe{+}\underline\bH\hatby$ and $\hatbz{=}\tilbH_{\haty}\hatby$, where $\hatbe$ and $\hatbz$ are the input and output of $\scrC$, respectively, $\hatbu{=}\col(\hatbu_i)_{i\in\bbN_N}$, and $\hatby$, $\hatbx$, $\hatbe$, $\hatbz$ are defined analogously. Then, $\scrC$ is well-posed if and only if $\bI{-}\hatbD_d\underline\bH$ is invertible, where $\hatbD_d{=}\diag(\hatbD_i)_{i\in\bbN_N}$.
\end{lemma}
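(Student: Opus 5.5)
The plan is to write the interconnection as a static algebraic loop and show that solvability of this loop is exactly what well-posedness requires. Stack the local realizations into $\hatbA_d{=}\diag(\hatbA_i)_{i\in\bbN_N}$, $\hatbB_d$, $\hatbC_d$ and $\hatbD_d$ (the latter as in the statement), so that $\dot\hatbx{=}\hatbA_d\hatbx{+}\hatbB_d\hatbu$ and $\hatby{=}\hatbC_d\hatbx{+}\hatbD_d\hatbu$. Substituting $\hatbu{=}\tilbH_y\hatbe{+}\underline\bH\hatby$ into the output equation gives the algebraic relation
\begin{align*}
(\bI-\hatbD_d\underline\bH)\hatby=\hatbC_d\hatbx+\hatbD_d\tilbH_y\hatbe .
\end{align*}
This has a unique solution $\hatby$ for every $(\hatbx,\hatbe)$ precisely when $\bI{-}\hatbD_d\underline\bH$ is invertible.

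For sufficiency, assume invertibility and set $\bL{=}(\bI{-}\hatbD_d\underline\bH)^{-1}$. Then $\hatby{=}\bL\hatbC_d\hatbx{+}\bL\hatbD_d\tilbH_y\hatbe$ and $\hatbu{=}\underline\bH\bL\hatbC_d\hatbx{+}(\bI{+}\underline\bH\bL\hatbD_d)\tilbH_y\hatbe$. This yields an explicit global state-space realization
\begin{align*}
\dot\hatbx&=(\hatbA_d+\hatbB_d\underline\bH\bL\hatbC_d)\hatbx+\hatbB_d(\bI+\underline\bH\bL\hatbD_d)\tilbH_y\hatbe,\\
\hatbz&=\tilbH_{\haty}\bL\hatbC_d\hatbx+\tilbH_{\haty}\bL\hatbD_d\tilbH_y\hatbe .
\end{align*}
Every internal signal $(\hatbu,\hatby,\hatbz)$ is therefore a finite-dimensional LTI map of $\hatbe$ with a constant feedthrough, so all transfer matrices are well defined and proper. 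To cover every interconnected transfer matrix, I would inject auxiliary disturbances at each $\hatbu_i$ and $\hatby_i$, as in the standard definition of [Khalil Ch.~5.2]. The same computation applies, because the loop matrix is unchanged by these injections.

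For necessity, I work in the frequency domain with $\hat\scrC_d(s){=}\hatbC_d(s\bI{-}\hatbA_d)^{-1}\hatbB_d{+}\hatbD_d$. Closing the loop requires $(\bI{-}\hat\scrC_d(s)\underline\bH)^{-1}$ to exist as a proper rational matrix. As $s{\to}\infty$, $\hat\scrC_d(s){\to}\hatbD_d$, so $\bI{-}\hat\scrC_d(s)\underline\bH{\to}\bI{-}\hatbD_d\underline\bH$. If this limit is singular, the inverse, when it exists at all, is not proper: its determinant tends to zero at infinity, so some entry grows unboundedly. Hence the transfer matrix from an injected disturbance at $\hatby$ to $\hatby$ is improper, or undefined if the determinant vanishes identically, and the system is not well-posed.

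The main obstacle is this necessity direction when the determinant is not identically zero, since I must show that the rational inverse is improper. I would argue via adjugates: $(\bI{-}\hat\scrC_d\underline\bH)^{-1}{=}\mathrm{adj}(\cdot)/\det(\cdot)$. The adjugate is proper, while $\det(\bI{-}\hat\scrC_d(s)\underline\bH)$ is a nonzero rational function that vanishes at infinity, so its reciprocal is improper. Finally, multiplying the adjugate by this reciprocal preserves improperness, because $\det(\cdot)^{-1}{=}\det(\mathrm{adj}(\cdot))^{-1}\det(\cdot)^{n-1}$ links the two, or alternatively because the product of the inverse with the proper matrix $\bI{-}\hat\scrC_d\underline\bH$ is $\bI$. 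Minimality of the realizations is used only so that the transfer matrices faithfully represent $\scrC_i$; without it, hidden modes could not affect properness anyway.
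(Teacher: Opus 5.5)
Your proof is correct and takes essentially the paper's route: your sufficiency realization matches the paper's $(\hatbA,\hatbB,\hatbC,\hatbD)$, and your frequency-domain necessity argument is the standard proof of the Chapter 5.2 well-posedness result that the paper cites rather than proves. One slip: with $M(s){=}\bI{-}\hat\scrC_d(s)\underline\bH$, the identity $\det(M)^{-1}{=}\det(\mathrm{adj}(M))^{-1}\det(M)^{n-1}$ is false, since its right-hand side equals $1$ (the relation you want is $\det(M^{-1}){=}1/\det M$, which is improper, whereas the determinant of a proper matrix is proper); your alternative argument, that a proper inverse would force $(\bI{-}\hatbD_d\underline\bH)\lim_{s\to\infty}M(s)^{-1}{=}\bI$, already closes that step.
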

\begin{proof}
    $\scrC$ can be modeled as a closed-loop system with an external input, with gains $\tilbH_y$ and $\tilbH_{\haty}$ on the input and output, respectively.
    The plant is a \gls{lti} system with $(\hatbA_d{,}\hatbB_d{,}\hatbC_d{,}\hatbD_d)$, where $\hatbA_d$, $\hatbB_d$ and $\hatbC_d$ are defined analogously to $\hatbD_d$, and the feedback gain is $\underline\bH$. Then, the closed-loop realization of $\scrC$ with input $\hatbe$ and output $\hatbz$ is given by
    \begin{align*}
        {\setlength{\arraycolsep}{0.5pt}\begin{array}{ll}
            \hatbA{=}\hatbA_d{+}\hatbB_d\underline\bH(\bI{-}\hatbD_d\underline\bH)^{{-}1}\hatbC_d, &
            \hatbB{=}(\hatbB_d{+}\hatbB_d\underline\bH(\bI{-}\hatbD_d\underline\bH)^{{-}1}\hatbD_d)\tilbH_y, \\
            \hatbC{=}\tilbH_{\haty}(\bI{-}\hatbD_d\underline\bH)^{-1}\hatbC_d, &
            \hatbD{=}\tilbH_{\haty}(\bI{-}\hatbD_d\underline\bH)^{-1}\hatbD_d\tilbH_y. 
        \end{array}}   
    \end{align*}
    Since the plant has a state-space realization, the well-posedness of $\scrC$ is equivalent to the existence of $(\bI{-}\hatbD_d\underline\bH)^{-1}$ \cite[Chapter 5.2]{khalil1996robust}.
\end{proof}

To ensure well-posedness, each local controller $\scrC_i$ and the resulting global controller $\scrC$ are given by
\begin{align} \label{eq:controller}
    {\setlength{\arraycolsep}{2.5pt}\begin{array}{rllll}
        \scrC_i: & \dot\hatbx_i{=}\hatbA_i\hatbx_i{+}\hatbB_i\hatbu_i, & \hatby_i{=}\hatbC_i\hatbx_i{+}\hatbD_i\hatbu_i, \\
        \scrC: & \dot\hatbx{=}\hatbA\hatbx{+}\hatbB\hatbe,&\hatbz{=}\hatbC\hatbx{+}\hatbD\hatbe, & \hatbu{=}\tilbH_y\hatbe, & \hatbz{=}\tilbH_{\haty}\hatby,
    \end{array}}
\end{align}
where $\hatbx_i{\in}\cL_{2e}^{n_i}$, $\hatbu_i{\in}\cL_{2e}^{l_i}$, $\hatby_i{\in}\cL_{2e}^{m_i}$, $\hatbA$, $\hatbB$, $\hatbC$, and $\hatbD$ follows the setup in \cref{lem:well_posed}. $\scrC$ is always well-posed since $\underline\bH{=}\bzero$.

\subsection{Closed-loop Networked System}
Consider the multi-agent networked system $\scrG{:}\be{\to}\bz$, defined in \cref{eq:plant}, interconnected in feedback with the networked controller $\scrC{:}\hatbe{\to}\hatbz$, defined in \cref{eq:controller}. 
Let $\bn$ and $\hatbn$ be exogenous disturbances affecting the plant and controller inputs, respectively, so that $\be{=}\bn{+}\hatbz$ and $\hatbe{=}\hatbn{+}\bz$, as illustrated in \cref{fig1:closed_loop_network}.
This configuration can be equivalently viewed as a network interconnection among agents, $\scrG_i$, and their respective local controllers, $\scrC_i$.
The resulting global interconnection is 
\begin{align} \label{eq:interconnection}
    {\setlength{\arraycolsep}{3.5pt}
    \begin{bmatrix}
        \bu \\ \hatbu
    \end{bmatrix}
    {=}\begin{bmatrix}
        \bn{+}\bH\by{+}\tilbH_{\haty}\hatby \\
        \tilbH_y(\bz{+}\hatbn)
    \end{bmatrix}
    {=}\begin{bmatrix}
        \bn \\ \tilbH_y\hatbn
    \end{bmatrix}
    {+}\barbH
    \begin{bmatrix}
        \by \\ \hatby
    \end{bmatrix},\;
    \barbH{=}\begin{bmatrix}
        \bH & \tilbH_{\haty} \\
        \tilbH_{y} & \bzero
    \end{bmatrix},}
\end{align}
where we used $\bz{=}\by$.
In the global interconnection matrix $\barbH$, 
$(\bH)_{ii}{=}\bzero$ to preclude local self-feedback. 
This closed-loop architecture is illustrated in \cref{fig1:decomposed_network}.
The objective is to synthesize sparse structures for $\tilbH_y$ and $\tilbH_{\haty}$ within $\barbH$ so that the controller $\scrC$ stabilizes the plant $\scrG$ in the presence of $\bn$ and $\hatbn$, while improving the network performance.

\begin{figure}
    \captionsetup[sub]{aboveskip=2pt, belowskip=0pt,width=0.25\textwidth}
    \centering
    \subfloat[Closed-loop representation]{
    \includegraphics[width=0.23\textwidth]{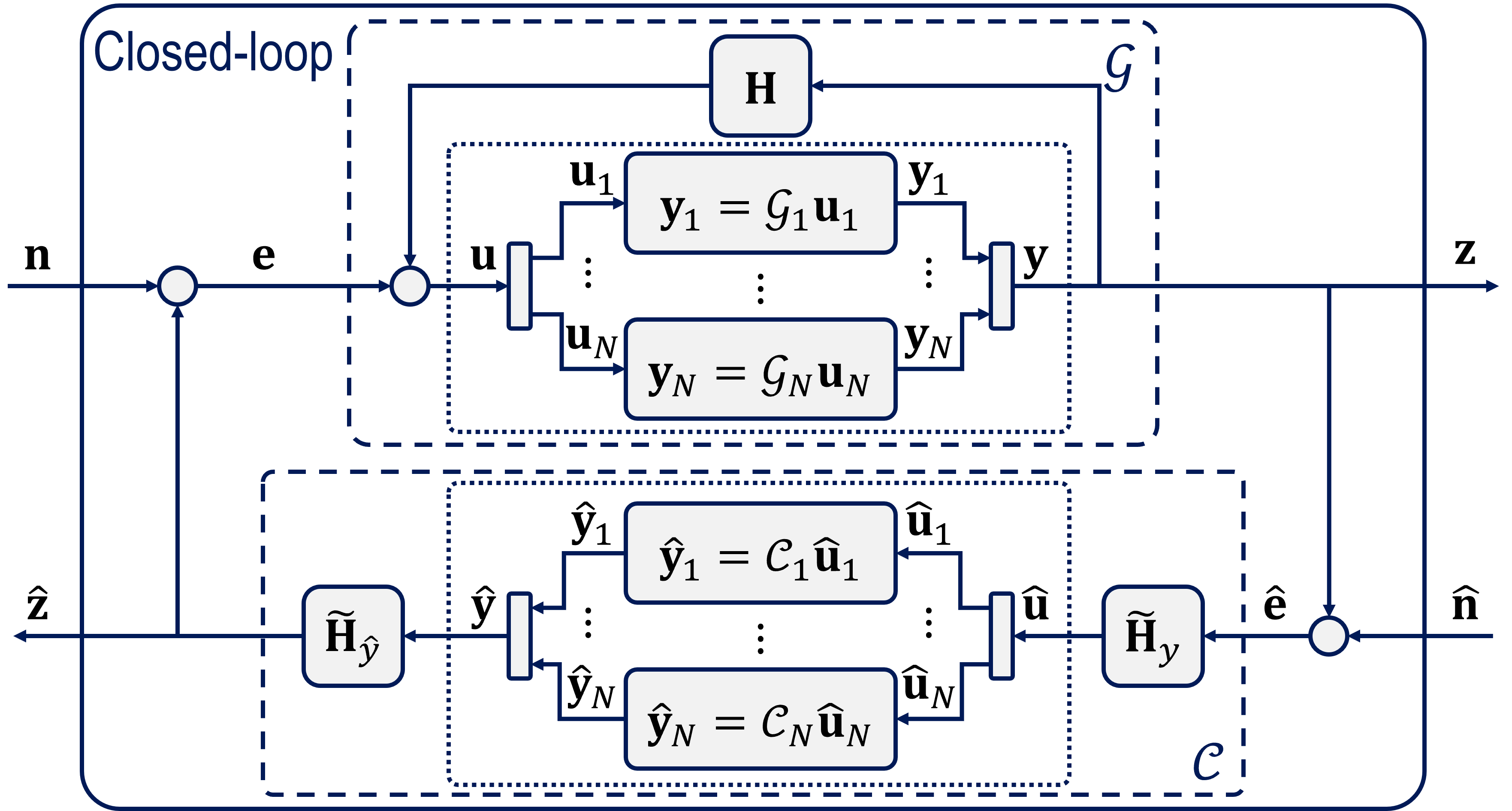} \label{fig1:closed_loop_network}
    }
    \subfloat[Decomposed representation]{
    \includegraphics[width=0.23\textwidth]{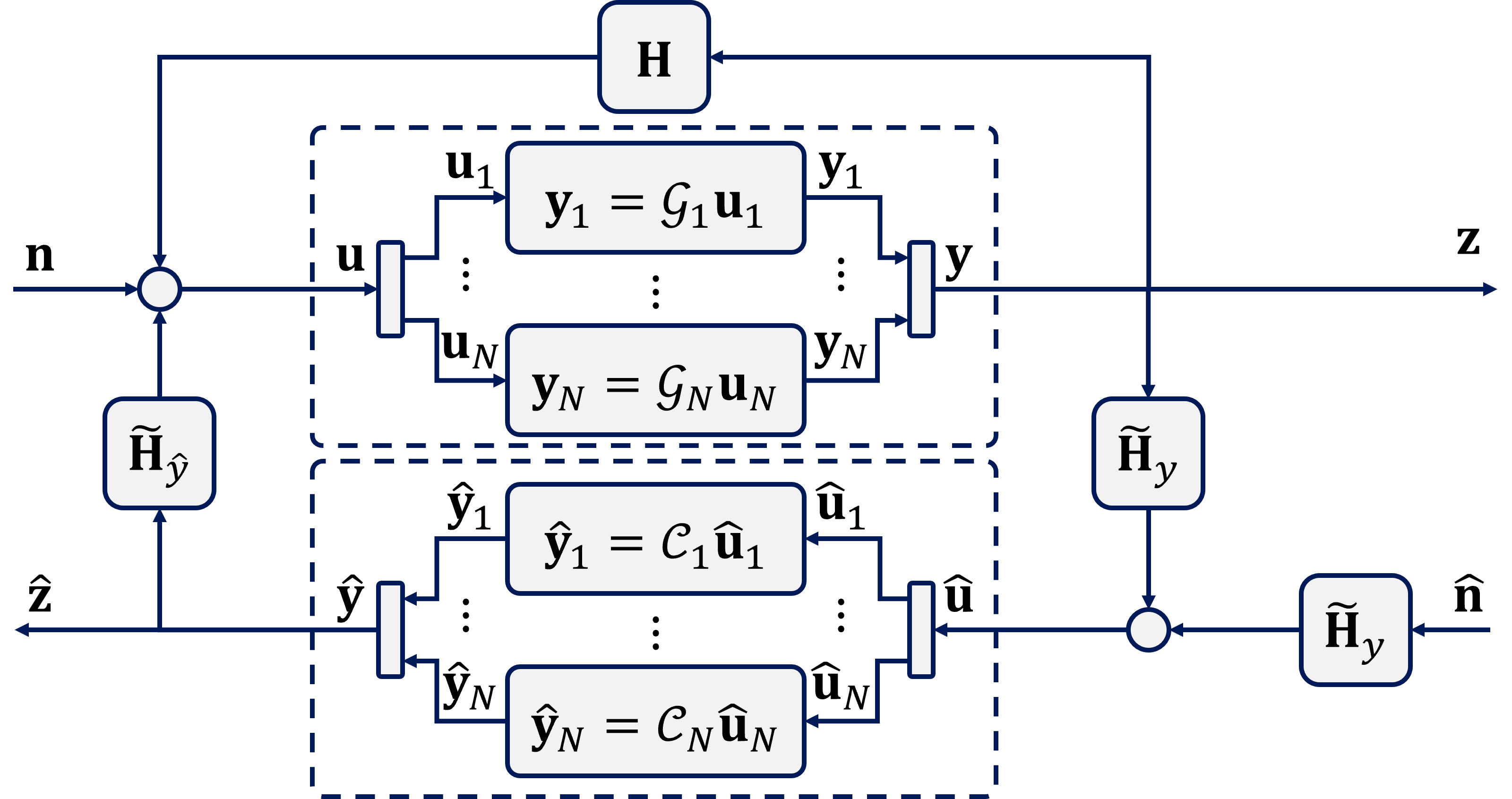} \label{fig1:decomposed_network}
    }
    \vspace{-2pt}
    \caption{Two representations of a multi-agent system and its controller.}
    \label{fig1:network}
    \vspace*{-1.25\baselineskip} 
\end{figure}

\subsection{Dual-Model Synthesis} \label{subchap:dual_model}
\gls{io} stability is closely linked to the $\bQ\bS\bR$-dissipativity properties of the agents and controllers.
If each agent is $\bQ_i\bS_i\bR_i$-dissipative, stability of the system can be guaranteed by enforcing $\hatbQ_i\hatbS_i\hatbR_i$-dissipativity on the corresponding local controllers and applying \gls{ndt} with $\barbH$.
However, according to \cref{eq:interconnection}, the external input to the system is $(\bn,\tilbH_y\hatbn)$, not $(\bn,\hatbn)$, which differs from the standard configuration of \gls{ndt}.
Nevertheless, \gls{ndt} remains applicable, because premultiplication by a time-invariant matrix preserves $\cL_2$-integrability of signals.

In this work, the objective of the controller is to attenuate the impact of $\bn$ and $\hatbn$ on the plant while guaranteeing \gls{io} stability of the network via \gls{ndt}, and this attenuation can be performed using a nominal linearized plant, $\scrG^{lti}$, described by
\begin{align} \label{eq:linear_plant}
    {\setlength{\arraycolsep}{.5pt}\begin{array}{rllll}
        \scrG_i^{lti}{:} & \dbx_i{=}\bA_i\bx_i{+}\bB_i\bu_i, & \by_i{=}\bC_i\bx_i, \\
        \scrG^{lti}{:} & \dbx{=}(\bA_d{+}\bB_d\bH\bC_d)\bx{+}\bB_d\be,  &\;\bz{=}\bC_d\bx, & \bu{=}\be{+}\bH\by,\,\bz{=}\by,
    \end{array}}
\end{align}
and a global controller $\scrC$ in \cref{eq:controller}. The parameters of the resulting closed loop of $\scrG^{lti}$ and $\scrC$ are
\begin{align}
    \hspace{-3pt}\bA_{cl}{=}\barbA{+}\tilbB\hatbK\tilbC,\,
    \bB_{cl}{=}\barbB{+}\tilbB\hatbK\tilbH,\,
    \bC_{cl}{=}\barbC{+}\hatbH\hatbK\tilbC,\,
    \bD_{cl}{=}\hatbH\hatbK\tilbH,\hspace{-5pt}
\end{align}
where the auxiliary system matrices are defined as
\begin{align*}
    \barbA&{=}{\setlength{\arraycolsep}{2pt}\begin{bmatrix}
        \bA_d{+}\bB_d\bH\bC_d & \bzero \\ \bzero & \bzero
    \end{bmatrix}},
    \barbB{=}{\setlength{\arraycolsep}{2pt}\begin{bmatrix}
        \bB_d & \bzero \\ \bzero & \bzero
    \end{bmatrix}},
    \barbC{=}{\setlength{\arraycolsep}{2pt}\begin{bmatrix}
        \bC_d & \bzero \\ \bzero & \bzero
    \end{bmatrix}},
    \hatbK{=}{\setlength{\arraycolsep}{2pt}\begin{bmatrix}
        \hatbA_d & \hatbB_d \\ \hatbC_d & \hatbD_d
    \end{bmatrix}}, \\
    \tilbB&{=}{\setlength{\arraycolsep}{2pt}\begin{bmatrix}
        \bzero & \bB_d\tilbH_{\haty} \\ \bI & \bzero
    \end{bmatrix}},
    \tilbC{=}{\setlength{\arraycolsep}{2pt}\begin{bmatrix}
        \bzero & \bI \\ \tilbH_y\bC_d & \bzero
    \end{bmatrix}},
    \hatbH{=}{\setlength{\arraycolsep}{2pt}\begin{bmatrix}
        \bzero & \bzero \\ \bzero & \tilbH_{\haty}
    \end{bmatrix}},
    \tilbH{=}{\setlength{\arraycolsep}{2pt}\begin{bmatrix}
        \bzero & \bzero \\ \bzero & \tilbH_{y}
    \end{bmatrix}}.
\end{align*}
This attenuation process is formulated as minimizing the $\cH_\infty$-norm bound $\nu{\geq}0$ subject to the existence of $\bY{\succ}0$ satisfying
\begin{align} \label{eq:Hinf_norm}
    \begin{bmatrix}
        \bY\bA_{cl}{+}\bA_{cl}^T\bY & \bY\bB_{cl} & \bC_{cl}^T \\
        \bB_{cl}^T\bY & -\nu\bI & \bD_{cl}^T \\
        \bC_{cl} & \bD_{cl} & -\nu\bI
    \end{bmatrix}\prec0.
\end{align}

\subsection{Sparsity Promoting Controller Synthesis}
Including a penalty term $g(\barbH)$ in the objective function, such as the $\ell_1$ norm, weighted $\ell_1$ norm, sum-of-logs, and the cardinality function of $\barbH$ \cite{lin2013design}, promotes sparsity in the network topology $\barbH$. 
Accordingly, the sparsity-promoting dissipativity augmented controller is obtained by solving
\begin{subequations} \label{eq:sparsity_promotion}
\begin{align}
    \hspace{-7pt}\arg\!\!\min_{\substack{
            \scriptscriptstyle \hatbK,\bF,\hatbF,\barbH
        }}&\quad J(\hatbK)+\gamma g(\barbH) \label{eq:objective}\\[-6pt]
    \sthat\quad&\begin{bmatrix}
        \bY\bA_{cl}{+}\bA_{cl}^T\bY & \bY\bB_{cl} & \bC_{cl}^T \\
        \bB_{cl}^T\bY & -\nu\bI & \bD_{cl}^T \\
        \bC_{cl} & \bD_{cl} & -\nu\bI
    \end{bmatrix}{\prec}0, \label{eq:lyapunov}\\
    &\hspace{-10pt}\bF_i{\in}\set{\bF_i|\text{Constraint for dissipativity holds}},\forall i{\in}\bbN_N \label{eq:diss_plant}\\
    &\hspace{-10pt}{\setlength{\arraycolsep}{2pt}{-}\begin{bmatrix}
        \hatbC_i^T\hatbQ_i\hatbC_i & \hatbC_i^T\hspace{-1pt}\hatbQ_i\hatbC_i \\ \hatbD_i^T\hatbQ_i\hatbC_i & \hatbR_i{+}\hatbD_i^T\hatbQ_i\hatbD_i
    \end{bmatrix}\hspace{-2pt}
    {+}\he\hspace{-1pt}\brackets{\hspace{-1pt}
    \begin{bmatrix}
        \bP_i & \bzero \\ \bzero & {-}\bS_i^T
    \end{bmatrix}
    \hatbK_i}{\preceq}0},\hspace{-5pt} \label{eq:diss_ctrl}\\
    &\bQ{+}\he(\bS\barbH){+}\barbH^T\bR\barbH{\prec}0, \label{eq:diss_ndt} 
\end{align}
\end{subequations}
where $\hatbK_i{=}\begin{bmatrix}
    \hatbA_i & \hatbB_i \\ \hatbC_i & \hatbD_i
\end{bmatrix}$, $\bF_i{=}\set{\bQ_i,\bS_i,\bR_i}$, $\hatbF_i{=}\set{\hatbQ_i,\hatbS_i,\hatbR_i}$, $\bQ{=}\diag(\diag(\bQ_i)_{i\in\bbN_N},\diag(\hatbQ_i)_{i\in\bbN_N})$, $\bS$ and $\bR$ are defined analogously, $\bF{=}\bigcup_{i=1}^N\bF_i$, and $\hatbF{=}\bigcup_{i=1}^N\hatbF_i$.
Simultaneously enforcing \cref{eq:lyapunov} and \cref{eq:diss_ndt} together constitutes the dual-model synthesis approach explained in \cref{subchap:dual_model}.

\subsection{Advantage of \gls{ndt}}
The primary advantage of \gls{ndt} lies in its modular framework, enabling the stabilization of a global network using only open-loop characteristics of agents and their local controllers.
By decoupling agent-level dynamics from the network topology, any heterogeneous agents can be integrated into \cref{eq:sparsity_promotion} once they satisfy dissipativity conditions, which can be formulated into \cref{eq:diss_plant}.
In addition, various dissipativity characterizations can be employed to formulate \cref{eq:diss_plant} as an \gls{lmi}. For instance, \cite[Equation 3.2]{gupta1996robust}, 
\cite[Theorem 3.1]{bridgeman2016conic}, or \cite[Theorem 4]{Strong2024IterativeGain} provide suitable formulations for agents modeled as \gls{lti} systems, 
\gls{lti} systems with input/output/state delays, or input-affine nonlinear systems, respectively.
If dissipativity parameters of nonlinear agents are established a priori, e.g. $(\bQ_i^p,\bS_i^p,\bR_i^p)$, these predefined parameters may be scaled by a design variable $\lambda_i{\geq}0$, yielding $(\bQ_i,\bS_i,\bR_i){=}\lambda_i(\bQ_i^p,\bS_i^p,\bR_i^p)$, and  \cref{eq:diss_plant} can be omitted.


\section{Algorithm} \label{Chap:algorithm}
The difficulty of \cref{eq:sparsity_promotion} lies in its non-convex penalty and the constraints \cref{eq:lyapunov}, \cref{eq:diss_ctrl}, and \cref{eq:diss_ndt}.
This section details an approach to solve \cref{eq:sparsity_promotion} by reformulating \cref{eq:lyapunov}, \cref{eq:diss_ctrl}, and \cref{eq:diss_ndt} using \cref{thm:overbounding} and computing a feasible sparse controller based on \gls{admm}.

\subsection{Convex Overbounding of \texorpdfstring{\gls{bmi}}{BMI} Constraints} \label{subChap:Convex_Overbounding_of_BMI_Constraints}
This section derives convex reformulations of \cref{eq:lyapunov}, \cref{eq:diss_ctrl}, and \cref{eq:diss_ndt} based on \cref{thm:overbounding} using given initial points. 
First, \cref{cor:lmi_Hinf} provides \gls{lmi} implying \cref{eq:lyapunov}.
\begin{corollary} \label{cor:lmi_Hinf}
    Given $\hatbK^0,\tilbB^0,\tilbC^0,\hatbH^0,\tilbK$, and $\bY^0$, if there exist $\dhatbK,\dtilbB,\dtilbC,\dhatbH,\dtilbH,\dbY$, and $\nu{>}0$ such that $\bY^0{+}\dbY{\succ}0$ and
    \begin{align} \label{eq:lmi_Hinf}
        \bracketl{
        \begin{array}{c:c}
            \Phi_{11} & * \\\hdashline
            \Phi_{21} & \Phi_{22}
        \end{array}
        }{\prec}0,
        \end{align}
        where
    \begin{align}
        \Phi_{11}&{=}\begin{bmatrix}
            \he(\cT_0^1(\bY(\barbA{+}\tilbB\hatbK\tilbC))) & * & * \\
            (\cT_0^1(\bY(\barbB{+}\tilbB\hatbK\tilbH)))^T & -\nu\bI & \bzero \\
            \cT_0^1(\barbC{+}\hatbH\hatbK\tilbC) & \cT_0^1(\hatbH\hatbK\tilbH) & -\nu\bI
        \end{bmatrix},\nonumber\\
        \Phi_{21}&{=}{\setlength{\arraycolsep}{3pt}\begin{bmatrix}
            (\bY^0\tilbB^0\dhatbK)^T{+}\dtilbC & \bzero & (\Xi_l\dPi_k)^T \\
            (\bY^0\tilbB^0\dhatbK)^T & \dtilbH & (\hatbH^0\dhatbK)^T \\
            (\dbY^0\dtilbB)^T{+}\dhatbK\tilbC^0{+}\hatbK^0\dtilbC & \dhatbK\tilbH^0{+}\hatbK^0\dtilbH & \bzero \\
            \dhatbK\tilbC^0{+}\hatbK^0\dtilbC & \dhatbK\tilbH^0{+}\hatbK^0\dtilbH & \dhatbH^T \\
            \bL_1^T{+}\dbY & \bL_2^T & \bzero
        \end{bmatrix}}, \nonumber\\
        \Phi_{22}&{=}\begin{bmatrix}
            -2\bI & \bzero & * & * & * \\
            \bzero & -2\bI & * & * & * \\
            \dhatbK & \dhatbK & -2\bI & \bzero & * \\
            \dhatbK & \dhatbK & \bzero & -2\bI & \bzero \\
            (\tilbB^0\dhatbK)^T & (\tilbB^0\dhatbK)^T & (\dtilbB)^T & \bzero & -2\bI
        \end{bmatrix}, \nonumber
    \end{align}
    $\bL_1\hspace{-1pt}{=}\dtilbB\hatbK^0\tilbC^0\hspace{-1pt}{+}\tilbB^0\dhatbK\tilbC^0\hspace{-1pt}{+}\tilbB^0\hatbK^0\dtilbC$, and $\bL_2{=}\dtilbB\hatbK^0\tilbH^0\hspace{-1pt}{+}\tilbB^0\dhatbK\tilbH^0+\tilbB^0\hatbK^0\dtilbH$, then $\hatbK^0\hspace{-2pt}{+}\dhatbK$, $\tilbB^0\hspace{-2pt}{+}\dtilbB$, $\tilbC^0\hspace{-2pt}{+}\dtilbC$, $\hatbH^0\hspace{-2pt}{+}\dhatbH$, $\tilbH^0\hspace{-2pt}{+}\dtilbH$, and $\bY^0\hspace{-2pt}{+}\dbY$ are feasible points of \cref{eq:Hinf_norm}. Moreover, \cref{eq:lmi_Hinf} is always feasible if $\hatbK^0$, $\tilbB^0$, $\tilbC^0$, $\hatbH^0$, $\tilbH^0$, and $\bY^0$ are feasible for \cref{eq:Hinf_norm}.
\end{corollary}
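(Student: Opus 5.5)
We substitute the perturbed variables $\bY{=}\bY^0{+}\dbY$, $\hatbK{=}\hatbK^0{+}\dhatbK$, $\tilbB{=}\tilbB^0{+}\dtilbB$, $\tilbC{=}\tilbC^0{+}\dtilbC$, $\hatbH{=}\hatbH^0{+}\dhatbH$, $\tilbH{=}\tilbH^0{+}\dtilbH$ into the left-hand side of \cref{eq:Hinf_norm} and expand every product. Each block then splits into a part that is affine in the perturbations, namely the first-order Taylor terms $\cT_0^1(\cdot)$ that form $\Phi_{11}$, and a remainder of higher-order cross terms. Examples of the remainder are $\dbY\tilbB^0\hatbK^0\tilbC^0$, $\bY^0\dtilbB\dhatbK\tilbC^0$, $\dhatbH\dhatbK\tilbC^0$ and the trilinear and quadrilinear products such as $\dbY\dtilbB\dhatbK\dtilbC$. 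The aim is to write the whole matrix in \cref{eq:Hinf_norm} as $\Phi_{11}{+}\sum_k\he(\bX_k\bY_k)$. Here each pair $(\bX_k,\bY_k)$ collects one family of nonlinear terms, and $\bX_k$, $\bY_k$ are affine in the decision perturbations once the remaining factors are frozen at their initial values.

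Next we apply \cref{thm:overbounding} repeatedly with $\bN{=}\bI$ and $\bG{=}\bI$ (cf. the remark following it). Each application of the theorem to $\bQ{+}\he(\bX_k\bY_k){\prec}0$ appends one row and column block with $-2\bI$ on the diagonal, $\bX_k$ in the off-diagonal position, and $\bY_k^T$ coupling into the appended block. Applying it five times, in the same grouping as the five block rows of $\Phi_{21}$, yields $\Phi_{22}$ with its $-2\bI$ diagonal. The off-diagonal entries of $\Phi_{22}$, for example $\dhatbK$ and $(\tilbB^0\dhatbK)^T$, arise because some factors are themselves products of perturbations. Those factors are split again and placed in the mutual off-diagonal blocks of the appended rows, so this is a nested or Schur-complement use of \cref{thm:overbounding}.

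The main obstacle is this bookkeeping. We must choose the factorizations so that every cross term, including those with $\dbY$ times $\bL_1$ and $\bL_2$ and those from $\dhatbH\dhatbK$ in the $\bC_{cl}$ and $\bD_{cl}$ blocks, is covered exactly once. We must also check that the Schur complement of the $-2\bI$-type block $\Phi_{22}$ reproduces exactly the omitted nonlinear remainder plus a positive semidefinite slack. The approach is to verify this block row by block row: take the Schur complement of \cref{eq:lmi_Hinf} with respect to $\Phi_{22}$, which is negative definite because its diagonal is $-2\bI$ and its off-diagonal entries can be absorbed, and show that the result is at least as negative as the expanded form of \cref{eq:Hinf_norm}. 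Each step uses the inequality $\he(\bX\bY){\preceq}\bX\bX^T{+}\bY^T\bY$ underlying \cref{thm:overbounding}. Together with $\bY^0{+}\dbY{\succ}0$, this gives feasibility of the updated point in \cref{eq:Hinf_norm}.

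For the final claim, set all perturbations to zero. Then $\Phi_{21}{=}\bzero$, since $\bL_1$, $\bL_2$ and every listed entry are linear in the perturbations. $\Phi_{22}$ becomes $-2\bI$, and $\Phi_{11}$ becomes the left-hand side of \cref{eq:Hinf_norm} evaluated at the initial point, which is negative definite by assumption. Hence \cref{eq:lmi_Hinf} holds, together with $\bY^0{\succ}0$ and the given $\nu$, so the LMI is always feasible.
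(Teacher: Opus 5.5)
Your route is the paper's: repeated use of \cref{thm:overbounding} with $\bG{=}\bI$, appending one $-2\bI$ block per group of cross terms. Your argument for the ``always feasible'' claim is also correct: zero perturbations give $\Phi_{21}{=}\bzero$ and $\Phi_{22}{=}{-}2\bI$, and $\Phi_{11}$ reduces to \cref{eq:Hinf_norm} at the initial point. Your verification step needs repair in a few places.

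First, $\Phi_{22}\prec0$ does not follow from its $-2\bI$ diagonal. Its off-diagonal blocks $\dhatbK$, $\tilbB^0\dhatbK$, $\dtilbB$ are free decision variables and cannot be ``absorbed.'' It holds only because $\Phi_{22}$ is a principal submatrix of the matrix in \cref{eq:lmi_Hinf}, which is negative definite by hypothesis.

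Second, a one-shot Schur complement with respect to the coupled block $\Phi_{22}$ brings in $\Phi_{22}^{-1}$. It does not split into term-by-term uses of $\he(\bX\bY)\preceq\bX\bX^T{+}\bY^T\bY$. The clean argument is a chain, and the order matters. The paper's proof applies \cref{thm:overbounding} in the order $(\hatbK,\tilbC)$, then $\tilbH$, then $(\tilbB,\hatbH)$, then $\bY$, producing block rows 1, 2, 3--4, 5 of $\Phi_{21}$. You verify by removing these rows in reverse: row 5, then rows 3--4, then row 2, then row 1.
\begin{itemize}
\item Each removal is an exact instance of \cref{thm:overbounding} in which $\bQ$ is the remaining matrix.
\item Each removal adds terms to the surviving rows. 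Removing row 5 adds $\dbY\tilbB^0\dhatbK$ to rows 1--2 and $\dbY\dtilbB$ to row 3. Removing row 4 adds $\dhatbH\dhatbK$ to rows 1--2.
\item These added terms turn the frozen factors $\bY^0\tilbB^0\dhatbK$, $\bY^0\dtilbB$, $\hatbH^0\dhatbK$ into the full products $\bY\tilbB\dhatbK$, $\bY\dtilbB$, $\hatbH\dhatbK$ that the next removal needs.
\end{itemize}
The chain ends at \cref{eq:Hinf_norm}, so $\Phi_{22}^{-1}$ never appears.

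Third, the required direction is that the expanded matrix of \cref{eq:Hinf_norm} lies \emph{below} the Schur complement. Your phrase ``at least as negative as'' states the reverse, and it contradicts your own preceding sentence about a positive semidefinite slack.

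Finally, carrying out this bookkeeping shows that the displayed $\Phi_{21}$ contains typos that must be fixed before the check closes. The entry $(\Xi_l\dPi_k)^T$ should be $(\hatbH^0\dhatbK)^T$; it comes from the term $\hatbH\dhatbK\dtilbC$ in the $\bC_{cl}$ block. The entry $(\dbY^0\dtilbB)^T$ should be $(\bY^0\dtilbB)^T$.
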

\begin{proof}
    The proof follows by applying the overbounding condition in \cref{eq:overbounding_sebe} of \cref{thm:overbounding} sequentially with $\bG=\bI$. First, use $\hatbK{=}\hatbK^0{+}\dhatbK$ and $\tilbC{=}\tilbC^0{+}\dtilbC$. Next, use $\tilbH{=}\tilbH^0{+}\dtilbH$. Then use $\tilbB{=}\tilbB^0{+}\dtilbB$ and $\hatbH{=}\hatbH_0{+}\dhatbH$. Finally, use $\bY{=}\bY^0{+}\dbY$.
\end{proof}

The \glspl{lmi} in \cref{cor:lmi_kyp,cor:lmi_ndt}, introduced in \cite{locicero2025dissipativity,jang2025communication}, imply \cref{eq:diss_ctrl,eq:diss_ndt}, respectively.

\begin{corollary} \cite{locicero2025dissipativity} \label{cor:lmi_kyp}
    Given $\hatbA^0,\hatbB^0,\hatbC^0,\hatbD^0,\hatbQ^0,\hatbS^0,\hatbR^0$, and $\hatbP^0$, suppose there exist $\dhatbA,\dhatbB,\dhatbC,\dhatbD,\dhatbQ,\dhatbS,\dhatbR$, and $\dbY$ such that $\bY^0{+}\dbY{\succ}0$
    \begin{align} \label{eq:lmi_kyp}
        {\setlength{\arraycolsep}{1pt}
        \hspace{-20pt}\bracketl{
        \begin{array}{cccc}
            \hspace{-2pt}\cT_0^1\hspace{-1pt}\brackets{\hspace{-1pt}
            {-}\hspace{-1pt}\begin{bmatrix}
                \hspace{-1pt}\hatbC^T\hspace{-1pt}\hatbQ\hatbC & \hatbC^T\hspace{-1pt}\hatbQ\hatbC \\ \hspace{-1pt}\hatbD^T\hspace{-1pt}\hatbQ\hatbC & \hatbR\hspace{-1pt}{+}\hatbD^T\hspace{-1pt}\hatbQ\hatbD\hspace{-1pt}
            \end{bmatrix}\hspace{-2pt}
            {+}\he\hspace{-1pt}\brackets{\hspace{-1pt}
            \begin{bmatrix}
                \hspace{-1pt}\bP & \bzero \\ \bzero & {-}\bS^T\hspace{-1pt}
            \end{bmatrix}\hspace{-1pt}
            \hatbK
            \hspace{-1pt}
            }\hspace{-2pt}
            } & \hspace{-5pt}\;\;* & * & *  \\
            \begin{bmatrix}
                \dhatbP & \bzero \\ \bzero & {-}\dhatbS
            \end{bmatrix}
            {+}\dhatbK & \hspace{-7pt}{-}2\bI\hspace{-3pt} & \bzero & \bzero \\
            {-}\frac{1}{2}\hatbQ^0\begin{bmatrix}
                \dhatbC & \dhatbD
            \end{bmatrix} & \bzero & {-}2\bI & * \\
            \begin{bmatrix}
                \dhatbC{-}\dhatbQ\hatbC^0 & \dhatbD{-}\dhatbQ\hatbD^0
            \end{bmatrix} & \bzero & \hspace{-2pt}{-}\frac{1}{2}\dhatbQ & {-}2\bI\hspace{-2pt}
        \end{array}
        }\hspace{-2pt}{\preceq}0}{,}\hspace{-20pt}
    \end{align}
    Then $\hatbA^0\hspace{-2pt}{+}\dhatbA$, $\hatbB^0\hspace{-2pt}{+}\dhatbB$, $\hatbC^0\hspace{-2pt}{+}\dhatbC$, $\hatbD^0\hspace{-2pt}{+}\dhatbD$, $\hatbQ^0\hspace{-2pt}{+}\dhatbQ$, $\hatbS^0\hspace{-2pt}{+}\dhatbS$, $\hatbR^0\hspace{-2pt}{+}\dhatbR$, and $\hatbP^0\hspace{-2pt}{+}\dhatbP$ are feasible points of \cref{eq:diss_ctrl}. Moreover, if $\hatbA^0,\hatbB^0$, $\hatbC^0,\hatbD^0,\hatbQ^0,\hatbS^0,\hatbR^0$, and $\hatbP^0$ are feasible for \cref{eq:diss_ctrl}, \cref{eq:lmi_kyp} is always feasible.
\end{corollary}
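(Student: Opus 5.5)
We plan to proceed as in the proof of \cref{cor:lmi_Hinf}: substitute $\hatbK=\hatbK^0+\dhatbK$, $\hatbC=\hatbC^0+\dhatbC$, $\hatbD=\hatbD^0+\dhatbD$, $\hatbQ=\hatbQ^0+\dhatbQ$, $\hatbS=\hatbS^0+\dhatbS$, $\hatbR=\hatbR^0+\dhatbR$, $\hatbP=\hatbP^0+\dhatbP$ into the left-hand side of \cref{eq:diss_ctrl}. We then expand it as its first order part $\cT_0^1(\cdot)$ plus the nonlinear remainder, and bound each remainder by \cref{thm:overbounding} with $\bG=\bI$. Throughout, the extra rows and columns in \cref{eq:lmi_kyp} are read as the Schur-complement blocks produced by \eqref{eq:overbounding_sebe}. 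Since every block of $\hatbK$ enters affinely except through $\hatbC,\hatbD$ in the quadratic terms, the remainder has exactly two sources.

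The first source is the bilinear term $\he\big(\diag(\bP,-\bS^T)\hatbK\big)$. Its second-order residue is $\he\big(\diag(\dhatbP,-\dhatbS^T)\dhatbK\big)$. We will not apply \cref{thm:overbounding} to it directly. Instead we use the elementary bound $\he(\bX\bY)\preceq \tfrac12(\bX+\bY^T)(\bX+\bY^T)^T$, equivalently $\he(\bX\bY)\preceq \tfrac12(\bX+\bY^T)(\bX+\bY^T)^T$ arising from $-\tfrac12(\bX-\bY^T)(\bX-\bY^T)^T\preceq0$. Taking a Schur complement with $-2\bI$ then yields the second row/column $\diag(\dhatbP,-\dhatbS)+\dhatbK$ against $-2\bI$. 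This is the pattern of \eqref{eq:overbounding_sebe} with $\bG=\bI$, and we expect a transpose convention to absorb the difference between $\dhatbS$ and $\dhatbS^T$.

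The second source is the cubic term $-[\hatbC\;\hatbD]^T\hatbQ[\hatbC\;\hatbD]$. Writing $\bV=[\hatbC\;\hatbD]$, we expand $(\bV^0+\delta\bV)^T(\hatbQ^0+\dhatbQ)(\bV^0+\delta\bV)$. The terms linear in the perturbations go into $\cT_0^1$. The remaining terms are $\delta\bV^T\hatbQ^0\delta\bV$, $\he(\delta\bV^T\dhatbQ\bV^0)$, and $\delta\bV^T\dhatbQ\delta\bV$, all entering with a minus sign.

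We will regroup these remainder terms as $-\he\big(\tfrac12\delta\bV^T\hatbQ^0\delta\bV\big)$ and $-\he\big((\delta\bV-\dhatbQ\bV^0)^T\cdot\tfrac12\dhatbQ\cdots\big)$-type products. Each product is overbounded by \cref{thm:overbounding} with a $-2\bI$ block. This produces the third and fourth block rows, $-\tfrac12\hatbQ^0\delta\bV$ and $\delta\bV-\dhatbQ\bV^0$, together with the coupling entry $-\tfrac12\dhatbQ$ between them. That coupling is what the nested application of \eqref{eq:overbounding_sebe} to the $\dhatbQ\delta\bV$ factor creates. Getting these factorizations to match exactly the displayed entries, including signs and the $\tfrac12$ scalings, is the main obstacle. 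We would try splitting $\delta\bV^T(\hatbQ^0+\dhatbQ)\delta\bV+\he(\delta\bV^T\dhatbQ\bV^0)$ as $\he\big(\delta\bV^T(\tfrac12\hatbQ^0\delta\bV)\big)+\he\big((\delta\bV-\dhatbQ\bV^0)^T(\cdots)\big)$ and applying \cref{thm:overbounding} twice, with the second application nested inside the first. A sign-indefiniteness issue in $\hatbQ^0$ is avoided because \cref{thm:overbounding} needs no definiteness of the factors.

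Two further points would complete the argument. First, the non-strict $\preceq$ is handled identically, since Schur complements with $-2\bI\prec0$ preserve semidefinite inequalities. Second, for the feasibility claim we set every $\delta$ variable to zero. All off-diagonal blocks then vanish, the $(1,1)$ block reduces to the left-hand side of \cref{eq:diss_ctrl} at the initial point, which is $\preceq0$ by assumption, and the remaining diagonal blocks $-2\bI$ are negative definite. Hence \cref{eq:lmi_kyp} is feasible.
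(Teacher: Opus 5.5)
Your framework is the intended one. The paper gives no proof here beyond citing \cite{locicero2025dissipativity}, and \cref{cor:lmi_Hinf} is proved by the same sequential use of \cref{thm:overbounding} with $\bG=\bI$. Your treatment of $\he(\diag(\dhatbP,-\dhatbS^T)\dhatbK)$ and of the feasibility claim is correct. The row $\bX^T+\bY=\diag(\dhatbP,-\dhatbS)+\dhatbK$ matches the display exactly, so no transpose convention is needed.

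The gap is the second source, which you leave as ``$\he((\delta\bV-\dhatbQ\bV^0)^T(\cdots))$'' and hope to match to the displayed third and fourth rows. This cannot be done. \cref{thm:overbounding} applied to $\he(\bX\bY)$ produces the single row $\bX^T+\bY$. So $-\delta\bV^T\hatbQ^0\delta\bV=\he(\delta\bV^T(-\tfrac12\hatbQ^0\delta\bV))$ yields the row $\delta\bV-\tfrac12\hatbQ^0\delta\bV$. Putting $-\tfrac12\hatbQ^0\delta\bV$ and $\delta\bV$ in separate rows, as the display does, amounts to the false bound $\he(\bX\bY)\preceq\tfrac12(\bX\bX^T+\bY^T\bY)$; Young's inequality needs the constant $1$.

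In fact \cref{eq:lmi_kyp} as displayed does not imply \cref{eq:diss_ctrl}, read as you do with $-[\hatbC\;\hatbD]^T\hatbQ[\hatbC\;\hatbD]-\diag(\bzero,\hatbR)$. Take a scalar controller with $\hatbA^0=-1$, $\hatbB^0=\hatbC^0=\hatbD^0=0$, $\hatbP^0=1$, $\hatbQ^0=-4$, $\hatbS^0=0$, $\hatbR^0=2$. Then the left side of \cref{eq:diss_ctrl} is $-2\bI$. Perturb only $\dhatbC=c$.
\begin{itemize}
\item Every first-order term vanishes, so the $(1,1)$ block stays $-2\bI$.
\item The off-diagonal rows are $\dhatbK$, $[2c\;\;0]$, $[c\;\;0]$, and the coupling $-\tfrac12\dhatbQ$ is zero.
\item Hence \cref{eq:lmi_kyp} is equivalent to $\diag(-2+3c^2,-2)\preceq0$.
\item But \cref{eq:diss_ctrl} at the updated point reads $\diag(-2+4c^2,-2)\preceq0$.
\end{itemize}
So $c^2=0.6$ satisfies the LMI and violates the constraint.

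What your nested argument does establish is a corrected LMI:
\begin{itemize}
\item Write the whole remainder as $\he(\delta\bV^T\bY)$ with $\bY=-\tfrac12\hatbQ^0\delta\bV-\dhatbQ\bV^0-\tfrac12\dhatbQ\delta\bV$.
\item Overbound once to get the third row $\delta\bV-\tfrac12\hatbQ^0\delta\bV-\dhatbQ\bV^0$.
\item Overbound the leftover bilinear entry $-\tfrac12\dhatbQ\delta\bV$ to get the fourth row $[\delta\bV,\ \bzero,\ -\tfrac12\dhatbQ]$.
\end{itemize}
You should state and prove that version, noting that the displayed first-column entries of rows three and four are misprinted, rather than try to reproduce them.
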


\begin{corollary} \cite{jang2025communication} \label{cor:lmi_ndt}
    Given $\bQ^0,\bS^0,\bR^0$, and $\barbH^0$, suppose there exist $\dbQ,\dbS,\dbR$, and $\dbarbH$ such that
    \begin{align} \label{eq:lmi_ndt}
        \begin{bmatrix}
            \cT_0^1(\bQ{+}\he(\bS\barbH){+}\barbH^T\bR\barbH) & * & * & * \\
            \dbS^T{+}\dbarbH & {-}2\bI & \bzero & \bzero \\
            \frac{1}{2}\bR^0\dbarbH{+}\dbH & \bzero & {-}2\bI & * \\
            \dbR\bH^0{+}\dbarbH & \bzero & \frac{1}{2}\dbR & {-}2\bI
        \end{bmatrix}{\prec}0,
    \end{align}
    Then $\bQ^0\hspace{-2pt}{+}\dbQ,\bS^0\hspace{-2pt}{+}\dbS,\bR^0\hspace{-2pt}{+}\dbR$, and $\barbH^0\hspace{-2pt}{+}\dbarbH$ are feasible points of \cref{eq:diss_ndt}. Moreover, if $\bQ^0,\bS^0,\bR^0$, and $\barbH^0$ are feasible for \cref{eq:diss_ndt}, \cref{eq:lmi_ndt} is always feasible.
\end{corollary}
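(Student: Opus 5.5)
We prove \cref{cor:lmi_ndt} by expanding \cref{eq:diss_ndt} around the given point and overbounding the second-order remainder with \cref{thm:overbounding} (taking $\bG{=}\bI$), applied once per bilinear product. Substituting $\bQ{=}\bQ^0{+}\dbQ$, $\bS{=}\bS^0{+}\dbS$, $\bR{=}\bR^0{+}\dbR$, $\barbH{=}\barbH^0{+}\dbarbH$, the left-hand side of \cref{eq:diss_ndt} splits as $\cT_0^1(\cdot)$ plus a remainder:
\begin{align*}
    \bQ{+}\he(\bS\barbH){+}\barbH^T\bR\barbH
    {=}\cT_0^1(\bQ{+}\he(\bS\barbH){+}\barbH^T\bR\barbH)
    {+}\he(\dbS\dbarbH)
    {+}\dbarbH^T\bR^0\dbarbH
    {+}\he(\dbarbH^T\dbR\barbH^0)
    {+}\dbarbH^T\dbR\dbarbH .
\end{align*}
Here $\cT_0^1$ collects every term that is at most linear in the perturbations, namely $\bQ^0{+}\dbQ{+}\he(\bS^0\barbH^0{+}\dbS\barbH^0{+}\bS^0\dbarbH){+}\barbH^{0T}\bR^0\barbH^0{+}\he(\barbH^{0T}\bR^0\dbarbH){+}\barbH^{0T}\dbR\barbH^0$.

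The remainder has three parts, each handled by a symmetric bound. First, $\he(\dbS\dbarbH)\preceq\frac12(\dbS{+}\dbarbH^T)(\dbS^T{+}\dbarbH)$, which is the $\bG{=}\bI$ instance of \cref{eq:overbounding_sebe}; a Schur complement with pivot $-2\bI$ turns it into the second row of \cref{eq:lmi_ndt}. Second, the term $\dbarbH^T\bR^0\dbarbH$ is written as $\he(\dbarbH^T\cdot\frac12\bR^0\dbarbH)$ and overbounded the same way. Third, the two $\dbR$ terms are combined as $\he(\dbarbH^T\dbR(\barbH^0{+}\frac12\dbarbH))$ and overbounded, which produces the fourth row. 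This last bound leaves a residual quadratic in $\dbR$, and the $-\frac12\dbR$ coupling entry between the third and fourth block rows absorbs it.

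The resulting chain of Schur complements gives a block LMI. Applying Schur complements in reverse then shows that \cref{eq:lmi_ndt} implies the linearization plus an upper bound on the remainder is negative definite, so \cref{eq:diss_ndt} holds at the updated point. Feasibility of \cref{eq:lmi_ndt} under a feasible initial point follows by setting all perturbations to zero. The matrix then becomes block diagonal with blocks $\bQ^0{+}\he(\bS^0\barbH^0){+}\barbH^{0T}\bR^0\barbH^0\prec0$ and $-2\bI$, which is negative definite.

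The main obstacle is the bookkeeping of the cubic term $\dbarbH^T\dbR\dbarbH$ and the off-diagonal coupling. The stated rows must match an exact nested application of \cref{thm:overbounding}, with the factors split between $\bX$ and $\bY$ exactly as in \cref{cor:lmi_kyp}. The first thing I would try is to verify this match by expanding the Schur complement of the lower-right $3{\times}3$ block, including its $\frac12\dbR$ coupling, and checking that it dominates the remainder.
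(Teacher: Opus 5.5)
Your expansion of the remainder is correct. So is the bound $\he(\dbS\dbarbH)\preceq\tfrac12(\dbS+\dbarbH^T)(\dbS^T+\dbarbH)$, which gives the second block row, and so is the zero-perturbation feasibility argument.

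\textbf{The gap.} It lies in producing the fourth block row and the coupling entry. This is the only nontrivial part of the corollary, and you leave it unverified. Your third step does not yield \cref{eq:lmi_ndt}. Overbounding $\he(\dbarbH^T\dbR(\barbH^0+\tfrac12\dbarbH))$ with $\bG=\bI$ gives the block row $\dbarbH+\dbR\barbH^0+\tfrac12\dbR\dbarbH$. That row is still bilinear, not $\dbR\barbH^0+\dbarbH$.

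Saying the coupling entry ``absorbs a residual'' is not a mechanism. Once rows three and four are coupled, their Schur contribution is no longer the sum of two separate $\tfrac12(\cdot)^T(\cdot)$ terms. You use those separate terms to pay for $\dbarbH^T\bR^0\dbarbH$ and for the $\dbR$ terms independently, which is no longer valid.

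The sign also matters: you wrote $-\tfrac12\dbR$, and with that sign the implication is false. Take the scalar case $\bR^0=\barbH^0=\bS^0=\dbS=0$, $\dbR=2$, $\dbarbH=1$, $\bQ^0+\dbQ=-\tfrac32$. The matrix is negative definite, since its Schur complement is $-\tfrac32+\tfrac12+\tfrac23<0$. Yet $\bQ+\he(\bS\barbH)+\barbH^T\bR\barbH=\tfrac12>0$.

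If you complete your own route by nesting an overbound on the entry $\tfrac12\dbR\dbarbH$, you get a valid but different five-block LMI, not the stated one.

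\textbf{The missing idea.} Put $\dbR$ into the third row.
\begin{itemize}
\item Overbound the whole quadratic $\dbarbH^T(\bR^0+\dbR)\dbarbH=\he(\dbarbH^T\cdot\tfrac12(\bR^0+\dbR)\dbarbH)$. This also absorbs the cubic term and produces the row $\dbarbH+\tfrac12\bR^0\dbarbH+\tfrac12\dbR\dbarbH$ with pivot $-2\bI$.
\item This row has a bilinear off-diagonal piece $\tfrac12\dbR\dbarbH$. Together with the leftover $\he(\dbarbH^T\dbR\barbH^0)$ in the $(1,1)$ block, it is exactly $\he(\bX\bY)$ with $\bX=\begin{bmatrix}\dbarbH^T\\ \bzero\end{bmatrix}$ and $\bY=\begin{bmatrix}\dbR\barbH^0 & \tfrac12\dbR\end{bmatrix}$ (plus a zero block for the decoupled second row).
\item A second application of \cref{thm:overbounding} with $\bG=\bI$ appends the row $\begin{bmatrix}\dbR\barbH^0+\dbarbH & \tfrac12\dbR\end{bmatrix}$. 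This is precisely the fourth block row with its $+\tfrac12\dbR$ coupling, reading $\dbH,\bH^0$ in the statement as $\dbarbH,\barbH^0$.
\end{itemize}

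Reading the chain backwards proves the implication. \cref{thm:overbounding} removes the fourth row and adds $\he(\bX\bY)$. A Schur complement then removes the uncoupled rows two and three. The two symmetric bounds on $\he(\dbS\dbarbH)$ and $\dbarbH^T\bR\dbarbH$ finish the argument. This is the sequential use of \cref{thm:overbounding} that the paper relies on, as in the proof of \cref{cor:lmi_Hinf}.
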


Applying \cref{cor:lmi_Hinf,cor:lmi_kyp,cor:lmi_ndt} yields the convex problem
\begin{subequations} \label{eq:lmi_control_synthesis}
\begin{align}
    \hspace{-7pt}\arg\!\!\!\!\!\min_{\substack{
            \scriptscriptstyle \dhatbK,\dbF,\dhatbF,\dbarbH
        }}&J(\hatbK), \label{eq:lmi_objective}\\[-6pt]
    \sthat\quad&\bN(\dhatbK,\dbF,\dhatbF,\dbarbH){\preceq}0, \label{eq:lmi_constraint_agent}
\end{align}
\end{subequations}
to construct the centralized controller for given initial feasible points, 
where $\bN(\hspace{-1pt}\dhatbK,\dbF,\dhatbF,\dbarbH\hspace{-1pt})$ composes: (\labelcref{eq:lmi_Hinf}) to ensure that $\cH_\infty$ norm of the network; agent dissipativity requirements encoded as \glspl{lmi}; \cref{eq:lmi_kyp} to impose controller dissipativity; and \cref{eq:lmi_ndt} to ensure that the network is stable via \gls{ndt}.

\subsection{Initialization} \label{subChap:Initialization}
Solving \cref{eq:lmi_control_synthesis} requires a set of initial feasible points, $\{\hatbA_i^0{,}\hatbB_i^0{,}\hatbC_i^0{,}\hatbD_i^0\}$ for $i{\in}\bbN_N$ and $\{\bQ^0{,}\bS^0{,}\bR^0{,}\barbH^0\}$.  If all agents are open-loop stable, the technique in \cite[Section 6]{locicero2025dissipativity} provides this. 
Otherwise, local controllers $\scrC_i$ can be designed via standard synthesis procedures, such as PID, \gls{lqg}, or $\cH_\infty$ design, while assuming a decentralized structure by fixing $\tilbH_y{=}\bI$ and $\tilbH_{\haty}{=}\bI$.
If the initial design fails to satisfy the dissipativity/stability constraints, the feasibility test is repeated with increased controller gains until a valid feasible point is achieved. 
The iterative relaxation approach of \cite{warner2017iterative} can also be used.
After feasible local controllers are obtained, solving \cref{eq:lmi_control_synthesis} yields an initial centralized feasible controller with dense initial interconnection matrices $\tilbH_y^0$ and $\tilbH_{\haty}^0$.

\subsection{Sparsity Promotion} \label{subChap:Sparsity_Promotion}
Once the initial centralized feasible points are obtained, the sparse topology can be determined by 
\begin{subequations} \label{eq:lmi_sparse_control_synthesis}
\begin{align}
    \hspace{-7pt}\arg\!\!\!\min_{\substack{
            \scriptscriptstyle \dhatbK,\dbF,\dhatbF,\dbarbH
        }}&J(\dhatbK)+\gamma g(\barbH^0+\dbarbH), \label{eq:sparse_objective}\\[-6pt]
    \sthat\quad&\bN(\dhatbK,\dbF,\dhatbF,\dbarbH){\preceq}0\label{eq:sparse_constraint_agent}
\end{align}
\end{subequations}
where the penalty function $g(\barbH^0{+}\dbarbH)$ promotes sparsity in the global controller. 
Following \cite[Section IV C.]{jang2025communication}, we consider the weighted $\ell_1$, $g_1(\barbH)$, and cardinality penalty, $g_0(\barbH)$,
\begin{align}
    \hspace{-5pt}g_1(\barbH^0\hspace{-3pt}{+}\dbarbH)&{=}\hspace{-10pt}\sum_{i,j\in\bbN_{2N}}\hspace{-9pt}\min\hspace{-1.5pt}\big\{{\|}(\barbH^0\hspace{-3pt}{+}\dbarbH)_{ij}{\|}_F^{{-}1},\epsilon_l^{{-}1}\big\}{\|}(\barbH^0\hspace{-3pt}{+}\dbarbH)_{ij}{\|}_F,\hspace{-5pt}\label{eq:weighted_ell1} \\
    \hspace{-5pt}g_0(\barbH^0\hspace{-3pt}{+}\dbarbH)&{=}\hspace{-10pt}\sum_{i,j\in\bbN_{2N}}\hspace{-9pt}\card\big({\|}(\barbH^0\hspace{-3pt}{+}\dbarbH)_{ij}{\|}_F\big). \label{eq:cardinality_penalty}
\end{align}
The problem is convex with \cref{eq:weighted_ell1}, but not with \cref{eq:cardinality_penalty}.
The \gls{admm} iteration from \cite[Section IV C.]{jang2025communication} can be employed with \cref{eq:cardinality_penalty} as
\begin{subequations} 
    \begin{align}
        \hspace{-5pt}\dbarbH^{r+1}&{=}\arg\!\!\!\!\!\!\min_{\substack{
            \scriptscriptstyle \dhatbK,\dbF,\dhatbF,\dbarbH
        }}J(\dhatbK){+}\frac{\rho}{2}\|\barbH^0{+}\dbarbH{-}\bZ^r{+}\bm\Lambda^r\|_F^2 \label{eq:H_update}\\[-4pt]
            &\qquad\quad\sthat\;\;\;\bN(\dhatbK,\dbF,\dhatbF,\dbarbH),\nonumber\\
        \bZ^{r+1}&{=}\arg\min_\bZ\gamma g(\bZ){+}\frac{\rho}{2}\|\barbH^0{+}\dbarbH^{r+1}{-}\bZ{+}\bm\Lambda^r\|_F^2{,}  \label{eq:Z_update}\\
        \bm\Lambda^{r+1}&{=}\bm\Lambda^r{+}(\barbH^0+\dbarbH^{r+1}{-}\bZ^{r+1}), \label{eq:Lambda_update}
    \end{align}
\end{subequations}
where $\bZ$ is the clone of $\barbH^0{+}\dbarbH$, $\bm\Lambda$ is the dual variable, $r$ is the iteration index of \gls{admm}, and $\rho{>}0$ is the augmented Lagrangian parameter.
The block component of the unique solution to \cref{eq:Z_update}, $(\bZ^{r+1})_{ij}$, is $(\bV)_{ij}$ if $\|(\bV)_{ij}\|_F{>}\sqrt{2\gamma/\rho}$ and $\bzero$ otherwise,
where $\bV{=}\barbH^0{+}\dbarbH^{r+1}{+}\bm\Lambda^r$ \cite{lin2013design}. The stopping criteria of \gls{admm} are $r_p{=}\frac{\|\barbH^0{+}\dbarbH^{r}{-}\bZ^{r}\|_F}{\|\bZ^{r}\|_F}{\leq}\epsilon_p$ and $r_d{=}\frac{\|\bZ^{r}{-}\bZ^{r-1}\|_F}{\|\bZ^{r}\|_F}{\leq}\epsilon_d$.

After solving \cref{eq:lmi_sparse_control_synthesis}, the local controllers and networks are updated to  $\hatbA_i^1{=}\hatbA_i^0\hspace{-2pt}{+}\dhatbA_i^\star$, $\hatbB_i^1{=}\hatbB_i^0\hspace{-2pt}{+}\dhatbB_i^\star$, $\hatbC_i^1{=}\hatbC_i^0\hspace{-2pt}{+}\dhatbC_i^\star$, $\hatbD_i^1{=}\hatbD_i^0\hspace{-2pt}{+}\dhatbD_i^\star$ for all $i{\in}\bbN_N$, and $\barbH^1{=}\barbH^0\hspace{-2pt}{+}\dbarbH^\star$, where $\dhatbA_i^\star$, $\dhatbB_i^\star$, $\dhatbC_i^\star$, $\dhatbD_i^\star$  for all $i{\in}\bbN_N$, and $\dbarbH^\star$ are the optimal perturbation obtained from the solution.  
The matrices $\bQ^1$, $\bS^1$, and $\bR^1$ are updated accordingly.
The subspace $\cH$ is then defined as the set of block matrices that share the same sparsity pattern as $\barbH^1$.

\subsection{Structured \texorpdfstring{\gls{ico}}{ico}} \label{subChap:Structured_Iterative_Convex_Overbounding}
Once the sparse structure $\cH$ is identified, we consider 
\begin{subequations} \label{eq:main_problem_structured}
    \begin{align}
    \hspace{-7pt}\arg\!\!\!\!\!\min_{\substack{
            \scriptscriptstyle \dhatbK,\dbF,\dhatbF,\dbarbH
        }}&\sum_{i=1}^NJ_i(\dhatbK_i), \label{eq:structured_objective}\\[-4pt]
    \sthat\quad&\bN(\dhatbK,\dbF,\dhatbF,\dbarbH){\preceq}0,\;\;\barbH^k{+}\dbarbH{\in}\cH, \label{eq:structured_constraint}
    \end{align}
\end{subequations}
and apply \gls{ico} to \cref{eq:main_problem_structured} to determine the optimal controller parameters $\hatbK^\star$, using the feasible points $\{\hatbA_i^1{,}\hatbB_i^1{,}\hatbC_i^1{,}\hatbD_i^1\}$  for all $i{\in}\bbN_N$, and $\{\bQ^1{,}\bS^1{,}\bR^1{,}\barbH^1\}$ obtained from the process in \cref{subChap:Sparsity_Promotion}. 
\gls{ico} is applied to \cref{eq:main_problem_structured} instead of \cref{eq:lmi_sparse_control_synthesis} since the cardinality penalty's discontinuity at zero invalidates the convergence guarantee of \gls{ico} \cite{locicero2022sparsity}.
The overall procedure of \cref{Chap:algorithm} is summarized in \cref{alg:sparsity_promotion}.

\begin{algorithm}[tbp]
    \caption{Sparsity-Promoting Controller Synthesis}\label{alg:sparsity_promotion}
    \begin{algorithmic}[1]
        \Require $g{,}\hatbK_i^0$ for $i\in\bbN_N,\bQ^0,\bS^0,\bR^0,\barbH^0,\epsilon_p,\epsilon_d,\epsilon_l,\epsilon$
        \Ensure $\hatbK_i$ for $i\in\bbN_N$, $\barbH$
        \If {$g(\barbH^0+\dbarbH)$ follows \cref{eq:weighted_ell1}}
            find $\dbarbH^\star$ by solving \cref{eq:lmi_sparse_control_synthesis}
        \ElsIf {$g(\barbH^0+\dbarbH)$ follows \cref{eq:cardinality_penalty}}
            \While {$r_p>\epsilon_p$ or $r_d>\epsilon_d$}
                \State Find $\dbarbH^{k+1}$ by solving \cref{eq:H_update}
                \State Find $\bZ^{k+1}$ by solving \cref{eq:Z_update}
                \State Find $\bm\Lambda^{k+1}$ by solving \cref{eq:Lambda_update}
            \EndWhile
        \EndIf
        \State Set $\barbH^1{=}\barbH^0{+}\dbarbH^\star$, and update feasible points, $\hatbK_i^1$ for $i\in\bbN_N$, $\bQ^1$ $\bS^1$, and $\bR^1$  analogous to $\barbH^1$.
        \State Define the structured subspace $\cH$
        \While {$\frac{J_i(\dhatbK_i^k)-J_i(\dhatbK_i^{k+1})}{J(\dhatbK_i^{k+1})}\nleq\epsilon$ for all $i\in\bbN_N$}
            \State Solve \cref{eq:main_problem_structured} using $\hatbK_i^k$ for $i{\in}\bbN_N$, $\bQ^k,\bS^k,\bR^k$, and $\barbH^k$
            \State Set $\barbH^{k+1}{=}\barbH^k{+}\dbarbH^\star$, and other feasible points, $\hatbK_i^{k+1}$, for $i{\in}\bbN_N,\bY^{k+1},\bQ^{k+1},\bS^{k+1}$, and $\bR^{k+1}$ analogous to $\barbH^1$
        \EndWhile
    \end{algorithmic}
\end{algorithm}

\subsection{Convergence of \cref{alg:sparsity_promotion}}
\cref{alg:sparsity_promotion} combines two stages, sparsity promotion and structured \gls{ico}.
The \gls{ico} stage is recursively feasible and converges to a local optimum, as each iteration is initialized with the previous feasible point and guarantees a non-increasing cost.
In contrast, the convergence of the sparsity-promotion stage depends on the penalty function:
$g_1$ ensures convergence because the problem is formulated as a \gls{sdp}, but $g_0$ is nonconvex and therefore lacks convergence guarantees.
However, as noted in \cite{lin2013design}, \gls{admm} converges regardless with a sufficiently large $\rho$.

\section{Numerical Example}
Sparse controllers are synthesized using \cref{alg:sparsity_promotion} for a networked system with polytopic uncertainty. 
We consider $N{=}10$ agents randomly distributed over a $2{\times}2$ grid. In a modified version of \cite{motee2008optimal}, the individual agents' dynamics deviate randomly from their nominal models,
\begin{align*}
    \dbx_i{=}\bA_{i}^n\bx_i{+}\begin{bmatrix}
        0 \\ 1
    \end{bmatrix}\Big(\be_i{+}\sum_{j\neq i}e^{-\alpha(i-j)}\by_j\Big),\quad
    \by_i{=}\begin{bmatrix}
        1 & 1
    \end{bmatrix}\bx_i,
\end{align*}
where $\bA_{i}^n=\begin{bmatrix}
    1 & 1 \\ 1 & 2
\end{bmatrix}$ for $i{\in}\bbN_5$ and $\begin{bmatrix}
    -2 & 1 \\ 1 & -3
\end{bmatrix}$ otherwise, $\alpha{=}0.1823$, $\bx_i{\in}\bbR^2$, and $\be_i{\in}\bbR$.
The first 5 agents are nominally unstable, while the remaining 5 are stable.
From local dynamics, $\bH$ has blocks $(\bH)_{ii}{=}0$ and $(\bH)_{ij}{=}e^{-\alpha(i-j)}$ for $i{\neq}j$.

The actual dynamics $\bA_{i}$ of each agent are uniformly distributed within $\pm4\%$ of their nominal values $\bA_{i}^n$, so each agent's dynamics can be modeled as an \gls{lti} system with polytopic uncertainty, represented by a polyhedron with 16 vertices, where each vertex corresponds to the maximum or minimum of one parameter in $\bA_{i}$.
To enforce agent dissipativity, \cite[Lemma 2]{jang2025communication} is used in \cref{eq:diss_plant}, with a small adjustment to generalize from $\by{=}\bx$ in \cite{jang2025communication} to $\by{=}\bC\bx$ here.

After computing the local \gls{lqg} controller with $\bQ_n{=}100\bI_2$ and $\bR_n{=}1$, an initial local feasible controller is found by scaling the transfer function of the \gls{lqg} controller by $10^{-3}$ and adding feedforward gain ${-}10$ for all $10$ agents.
The parameters are $\epsilon{=}\epsilon_p{=}\epsilon_d{=}\epsilon_l{=}10^{-3}$, and $\rho{=}1000$.
The weighting factor $\gamma$ is varied over 20 logarithmically spaced points on the inverval $[2{\times}10^{-5}{,}1.5]$ for the weighted $\ell_1$ penalty, and $[1{\times}10^{-6}{,}5]$ for the cardinality penalty.

\begin{figure}
\centering
    \subfloat[Different sparsity levels]{
    \resizebox{0.225\textwidth}{!}{
%
%
\definecolor{mycolor1}{rgb}{1.00000,0.00000,1.00000}%
\begin{tikzpicture}

\begin{axis}[%
width=4.521in,
height=3.566in,
at={(0.758in,0.481in)},
scale only axis,
xmin=20,
xmax=200,
xlabel style={font=\color{white!15!black}},
xlabel={Nonzero blocks in $\tilbH_y$ and $\tilbH_{\haty}$},
ymin=16.5,
ymax=23,
ylabel style={font=\color{white!15!black}},
ylabel={$J(\hatbK)$},
axis background/.style={fill=white},
legend style={at={(0.97,0.5)}, anchor=east, legend cell align=left, align=left, draw=white!15!black},
title style={font=\Huge},xlabel style={font={\LARGE}},ylabel style={font=\LARGE},legend style={font=\LARGE},
]
\addplot [color=blue, line width=1.0pt, mark size=2.5pt, mark=square*, mark options={solid, fill=blue, blue}]
  table[row sep=crcr]{%
20	22.6608126486484\\
24	19.4876976312872\\
36	17.7093921459345\\
40	17.787661185665\\
46	17.424365980529\\
80	17.1060119824429\\
82	17.1119932289834\\
84	17.1005690683585\\
86	17.1027749441875\\
88	17.093863848263\\
94	17.1019342194381\\
98	17.1173739082966\\
100	17.1312640021908\\
118	17.0975522758024\\
120	17.0862388100349\\
154	17.0961131974883\\
192	17.0868210440078\\
196	17.0759428427794\\
198	17.0802880868547\\
200	17.0878478528421\\
};
\addlegendentry{Weighted $\ell_1$}

\addplot [color=green, line width=1.0pt, mark size=3.5pt, mark=*, mark options={solid, fill=green, green}]
  table[row sep=crcr]{%
20	22.6608126486484\\
22	19.4980839320791\\
36	17.7112728723129\\
76	17.1798575613299\\
102	17.1110782264598\\
108	17.1563962389022\\
120	17.1195441028011\\
124	17.0874365115373\\
126	17.0959312053772\\
130	17.0924934902985\\
134	17.0870707282261\\
136	17.0930520090212\\
146	17.1001230980794\\
166	17.0865217093573\\
182	17.0961518663799\\
188	17.0824375573032\\
192	17.0919574704452\\
194	17.0882450626288\\
196	17.0761964223348\\
200	17.0878478528421\\
};
\addlegendentry{Cardinality}

\addplot [color=mycolor1, dotted, line width=3.0pt]
  table[row sep=crcr]{%
20	17.0878478528421\\
200	17.0878478528421\\
};
\addlegendentry{Centralized}

\addplot [color=black, dotted, line width=3.0pt]
  table[row sep=crcr]{%
20	22.6608126486484\\
200	22.6608126486484\\
};
\addlegendentry{Decentralized \cite{locicero2025dissipativity}}

\addplot [color=red, dotted, line width=3.0pt]
  table[row sep=crcr]{%
20	17.3664960926324\\
200	17.3664960926324\\
};
\addlegendentry{5\% threshold}

\end{axis}
\end{tikzpicture}%
} \label{fig:Hinf_performance}
    }
    \subfloat[Different uncertain agents]{
    \resizebox{0.225\textwidth}{!}{\input{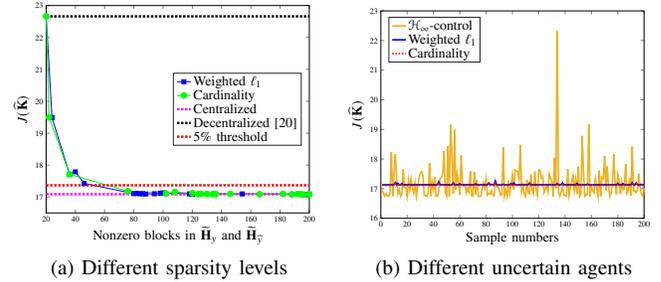}} \label{fig:Hinf_uncertain}
    }
    \caption{$\cH_\infty$-norm of resulting sparse controllers; In \cref{fig:Hinf_uncertain}, the number of nonzero blocks in $\tilbH_y$ and $\tilbH_{\haty}$ is $100$ and $102$ for weighted $\ell_1$ norm and cardinality, respectively.}  \label{fig:performance_analysis}
    \vspace*{-1\baselineskip} 
\end{figure}

\begin{table}
\begin{center}
\caption{Best and worst $\cH_\infty$-norm of \cref{fig:Hinf_uncertain}}\label{tb:Hinf}
\begin{tabular}{c:ccc}
\hline\hline
 & $\cH_\infty$-control & Weighted $\ell_1$ & Cardinality \\ \hline
Best & 16.7350 & 17.1313 & 17.1111 \\
Worst & 22.3155 & 17.2547 & 17.2325 \\ \hline\hline
\end{tabular}
\end{center}
\vspace*{-2.5\baselineskip} 
\end{table}

\cref{fig:Hinf_performance} shows the $\cH_\infty$ norm of the resulting controllers for different sparsity levels.
Since the parameter $\rho$ is sufficiently large, the sparsity-promotion method using cardinality successfully converges to a sparse controller structure.
The $\cH_\infty$ norm obtained using a decentralized controller network, which is equivalent to the approach in \cite{locicero2025dissipativity}, is $22.66$, whereas the centralized controller achieves $17.09$. 
Starting from the fully decentralized setup, both methods rapidly reduce the $\cH_\infty$ norm and reach a value within 5\% of the gap between decentralized and centralized case, which is $17.37$.
Before reaching the $5\%$ threshold, the cardinality-based method achieves better performance than the weighted $\ell_1$-norm for the same sparsity levels.
After reaching the $5\%$ threshold, the two methods exhibit similar performance.

The $\cH_\infty$ norms of resulting closed-loop networks are evaluated over 200 randomly generated true dynamics of each agent. For the comparison, a standard $\cH_\infty$ control approach is used to compute the optimal $\cH_\infty$ controller for the nominal agent dynamics.
As shown in \cref{fig:Hinf_uncertain}, the $\cH_\infty$ norm obtained using the $\cH_\infty$ controller exhibits significant fluctuations, whereas the norms obtained using the proposed approaches remain nearly constant while requiring half as many communication links.
This behavior arises because the dissipativity constraint accounts for all admissible realizations of the true agent dynamics.
Consequently, \cref{tb:Hinf} shows that the worst-case $\cH_\infty$ norm achieved by the proposed approach is significantly smaller than that obtained by the standard optimal $\cH_\infty$ controller.

\section{Conclusion}
This paper presented the approaches of synthesizing dissipativity-based dynamic output feedback controllers and a sparse communication network simultaneously using \gls{ndt}.
This extends \cite{jang2025communication} from state to output feedback. 
We first construct a condition to ensure the global controller is well-posed. 
Under this well-posed condition, the optimization problem with \gls{ndt} and a sparsity penalty was solved by mixing \gls{admm} and \gls{ico}.
A numerical example showed that the cardinality penalty performs slightly better than weighted $\ell_1$, which follows the conclusion in \cite{jang2025communication}.










\bibliographystyle{IEEEtran}
\bibliography{MyBib}{}
\end{document}